\documentclass[11pt]{article}
\usepackage{amsmath,amssymb}
\usepackage[dvips]{epsfig}
\usepackage{graphicx}

\headheight=8.2pt \topmargin=0pt \textheight=625pt
\textwidth=465pt \oddsidemargin=18pt \evensidemargin=18pt


\newtheorem{lemma}{Lemma}

\newenvironment{proof}
{\begin{trivlist} \item[]{\bf Proof. }}%
{\hspace*{\fill}$\rule{.3\baselineskip}{.35\baselineskip}$\end{trivlist}}

\newcommand{\R}{\mathbb{R}}

\newcommand{\Z}{\mathbb{Z}}
\newcommand{\N}{\mathbb{N}}

\renewcommand{\geq}{\geqslant}
\renewcommand{\leq}{\leqslant}

\renewcommand{\phi}{\varphi}
\newcommand{\be}{\begin{eqnarray}}
\newcommand{\ee}{\end{eqnarray}}

\newcommand{\eps}{\varepsilon}

\begin{document}

\title{\bf Excited states in the Thomas--Fermi limit:\\ a variational approach}

\author{M. Coles$^1$, D.E. Pelinovsky$^1$, and P.G. Kevrekidis$^2$,  \\
{\small $^{1}$ Department of Mathematics and Statistics, McMaster
University, Hamilton, Ontario, Canada, L8S 4K1} \\
{\small $^{2}$ Department of Mathematics and Statistics, University
of Massachusetts, Amherst, MA 01003} }

\date{\today}
\maketitle

\begin{abstract}
Excited states of Bose--Einstein condensates are considered in the
semi-classical (Thomas-Fermi) limit of the Gross--Pitaevskii equation with repulsive
inter-atomic interactions and a harmonic potential.
The relative dynamics of dark solitons (density dips on the localized condensate)
with respect to the harmonic potential and to each other is approximated
using the averaged Lagrangian method. This permits a complete
characterization of the equilibrium positions of the 
dark solitons as a function of the chemical potential parameter.
It also yields an analytical handle on the oscillation frequencies 
of dark solitons around such equilibria. The asymptotic
predictions are generalized for an arbitrary number
of dark solitons and are corroborated by numerical
computations for 2- and 3-soliton configurations.
\end{abstract}

\section{Introduction}

The defocusing nonlinear Schr\"{o}dinger equation is a prototypical model
for a variety of different settings including nonlinear optics, liquids,
mechanical systems, and magnetic films, among others. In one spatial
dimension, its prototypical excitation is the dark soliton,
i.e., a localized density dip on a continuous-wave background
(carrying also a phase jump).

One of the major areas where the description of dark solitons
with a mean-field model (also known as the Gross-Pitaevskii equation) has been
the physics of atomic Bose-Einstein condensates (BECs) \cite{pethick,PitStr}.
There, the repulsive inter-atomic interactions can be accurately
captured by an effective nonlinear self-action \cite{revnonlin}.
A considerable volume of experimental work has conclusively
demonstrated the relevance of such nonlinear waveforms
within harmonically confined condensates.
Although in earlier works, such coherent structures were
dynamically or thermally unstable \cite{bpa,han2},
more recent work has overcome such limitations \cite{engels,technion,hambcol,our2}.
This has been achieved by working at
sufficiently low temperatures (of the order of $10$nK) and for
strongly confined in the transverse directions, cigar-shaped BECs.
Furthermore, in these recent experiments, the nature of the generation process
(e.g., by interference of two independent BECs \cite{technion,our2,our1},
or through interaction of the BEC with an appropriate light pulse
\cite{hambcol}), it has been possible to produce two or
more dark solitons on the background of a localized condensate.
In principle, the resulting number of dark solitons can be chosen at will,
as indicated in \cite{our2}.

These recent developments prompt us to examine the dynamics
of dark solitons which are harmonically confined within
localized repulsive Bose-Einstein condensates. These can be thought of as
density dips that arise in nonlinear variants of the excited
states of the quantum harmonic oscillator \cite{alfimov2}. The study of the equilibrium
positions and near-equilibrium dynamics of these density dips
is the principal theme of the present contribution. In particular,
using a Lagrangian (variational) approach, we compute the asymptotic dependence
on the chemical potential parameter both for equilibrium positions of dark solitons and for 
their oscillation frequencies around such equilibria.

This article is organized as follows. In Section 2, we present the general mathematical
setup of the problem. Section 3 examines the single soliton case,
Section 4 extends considerations to 2-solitons, and Section 5
generalizes the results to an arbitrary number of $m$-solitons for $m \geq 2$.
Section 6 compares our asymptotic predictions to numerical
computations and suggests some interesting directions for further
study.

\section{Mathematical Setup}

Let us start with the Gross--Pitaevskii equation with a harmonic potential and repulsive 
nonlinear interactions
\begin{equation}
\label{GPphys}
i v_{\tau}  = - \frac{1}{2} v_{\xi \xi} + \frac{1}{2} \xi^2 v + |v|^2 v  - \mu v,
\end{equation}
where $v(\xi,\tau) : \mathbb{R} \times \mathbb{R} \to \mathbb{C}$ is the wave function and
$\mu \in \R$ represents the chemical potential (and is physically associated with the number of atoms in
the condensate). We are interested in localized modes of the Gross--Pitaevskii equation
in the limit $\mu \to \infty$, which is associated with the semi-classical or
Thomas--Fermi limit. Using the scaling transformation,
\begin{equation}
v(\xi,t) = \mu^{1/2} u(x,t), \quad \xi = (2 \mu)^{1/2} x, \quad \tau = 2 t,
\label{translation}
\end{equation}
the Gross--Pitaevskii equation (\ref{GPphys}) is transformed to the semi-classical form
\begin{equation}
\label{GP} i \eps u_t + \eps^2 u_{xx} + (1 - x^2 - |u|^2) u = 0,
\end{equation}
where $u(x,t) : \mathbb{R} \times \mathbb{R} \to \mathbb{C}$ is a new wave function 
and $\eps = (2 \mu)^{-1}$ is a small parameter.

Let $\eta_{\eps}$ be a real positive solution of the stationary problem
\begin{equation}
\label{stationaryGP} \eps^2 \eta_\eps''(x) + (1- x^2 -
\eta_\eps^2(x)) \eta_{\eps}(x) =0, \quad x \in \mathbb{R}.
\end{equation}
Main results of Ignat \& Millot \cite{IM,IM2} and Gallo \&
Pelinovsky \cite{GalPel2} state that for any sufficiently small $\eps > 0$
there exists a smooth solution $\eta_{\eps} \in {\cal
C}^{\infty}(\mathbb{R})$ that decays to zero as $|x| \to \infty$
faster than any exponential function. The ground state
converges pointwise as $\eps \to 0$ to the compact Thomas--Fermi
cloud
\begin{equation}
\label{Thomas-Fermi}
\eta_0(x) := \lim_{\eps \to 0} \eta_{\eps}(x) = \left\{ \begin{array}{cl} (1 - x^2)^{1/2}, \;\; & \mbox{for} \;\; |x| < 1, \\
0, \;\; & \mbox{for} \;\; |x| > 1. \end{array} \right.
\end{equation}
Useful properties of the ground state $\eta_{\eps}$ for sufficiently small $\eps > 0$
are summarized as follows:

\begin{itemize}
\item For any compact subset $K \in (-1,1)$, there is $C_K > 0$ such that
\begin{equation}
\label{C-K-bound} \| \eta_{\eps} - \eta_0 \|_{C^1(K)} \leq C_K
\eps^2.
\end{equation}

\item There is $C > 0$ such that
\begin{equation}
\label{L-infty-bound} \| \eta_{\eps} - \eta_0 \|_{L^{\infty}} \leq
C \eps^{1/3}, \quad \| \eta_{\eps}' \|_{L^{\infty}} \leq C
\eps^{-1/3}, \quad \| \eta_{\eps}'' \|_{L^{\infty}} \leq C
\eps^{-1}.
\end{equation}
\end{itemize}

We shall consider excited states of the Gross--Pitaevskii equation
(\ref{GP}), which are non-positive solutions of the stationary
problem
\begin{equation}
\label{stationaryGPexc} \eps^2 u_\eps''(x) + (1- x^2 -
u_\eps^2(x)) u_{\eps}(x) =0, \quad x \in \mathbb{R}.
\end{equation}
The excited states can be classified by the number $m$ of zeros of $u_{\eps}(x)$ on $\R$.
A unique solution with $m$ zeros exists near $\eps = \eps_m$ by the local
bifurcation theory \cite{PelKev}, where $\eps_m = \frac{1}{1 + 2 m}$, $m \in
\mathbb{N}$. Because of the symmetry of the harmonic potential,
the $m$-th excited state $u_{\eps}(x)$ is even on $\mathbb{R}$ for even $m \in
\mathbb{N}$ and odd on $\mathbb{R}$ for odd $m \in \mathbb{N}$.
The $m$-th excited state is continued for $\eps < \eps_m$ numerically
by Zezyulin {\em et al.} \cite{ZAKP}.

In our work we shall apply variational approximations \cite{KK} to study
relative dynamics of dark solitons (localized
solutions with nonzero boundary conditions on the background of
the positive ground state $\eta_{\eps}$) with respect to the harmonic
potential and to each other. In particular, we obtain results on existence and
spectral stability of the excited states from analysis of 
equilibrium positions of dark solitons and their oscillation frequencies near 
such equilibrium.  To enable this formalism, we substitute
$$
u(x,t) = \eta_{\eps}(x) v(x,t)
$$
to the Gross--Pitaevskii equation (\ref{GP}) and find an
equivalent equation
\begin{equation}
\label{GP-renormalized} i \eps \eta_{\eps}^2 v_t + \eps^2 \left( \eta_{\eps}^2 v_x
\right)_x + \eta_{\eps}^4 (1 - |v|^2) v = 0.
\end{equation}
Excited states are solutions of the stationary equation
\begin{equation}
\label{stationaryGP-renormalized} \eps^2 \frac{d}{dx} \left(
\eta_{\eps}^2(x) V'_m(x) \right) + \eta_{\eps}^4(x) (1 - V^2_m(x))
V_m(x) = 0,  \quad x \in \mathbb{R},
\end{equation}
which have exactly $m$ zeros on $\R$ and satisfy the boundary
conditions
$$
\lim_{x \to \pm \infty} V_m(x) = (\pm 1)^m, \quad m \in
\mathbb{N}.
$$
Solutions of the stationary Gross--Pitaevskii equation
(\ref{stationaryGP-renormalized}) are critical points of the
energy functional
\begin{equation}
\label{energy-functional}
\Lambda(v) = \eps^2 \int_{\R} \eta_{\eps}^2(x) |v_x|^2 dx + \frac{1}{2}
\int_{\mathbb{R}} \eta_{\eps}^4(x) (1 - |v|^2)^2 dx.
\end{equation}
in the sense of $\frac{\delta \Lambda}{\delta \bar{v}} |_{v = V_m} = 0$.
The time-dependent Gross--Pitaevskii equation (\ref{GP-renormalized})
follows from the Lagrangian function $L(v) = K(v) + \Lambda(v)$,
where
\begin{equation}
\label{Lagrangian} K(v) = \frac{i}{2} \eps \int_{\mathbb{R}}
\eta_{\eps}^2(x) ( v \bar{v}_t - \bar{v} v_t ) dx,
\end{equation}
by means of the Euler--Lagrange equations
$$
\frac{\delta L}{\delta \bar{v}} - \frac{d}{dt} \frac{\delta
L}{\delta \bar{v}_t} = 0.
$$
In what follows, we obtain variational approximations for
time-dependent solutions near the excited states $V_m(x)$ for $m =
1$, $m = 2$, and in the general case $m \geq 2$. We also compare these approximations 
with numerical results for $m = 2$ and $m = 3$.

\section{1-soliton ($m = 1$)}
\label{section-1-soliton}

Let us consider the dark soliton
\begin{equation}
\label{1-soliton}
v_1(x,t) = A(t) \; \tanh\left( \eps^{-1} B(t) (x - a(t)) \right) + i b(t), \quad A > 0, \; B > 0, \; a \in \R, \; b \in \R,
\end{equation}
as an ansatz for the Lagrangian $L(v)$. The
motivation for this choice originates from the fact that
(\ref{1-soliton}) is an exact solution of (\ref{GP-renormalized})
if $\eta_{\eps} = 1$ under constraints
$$
A = \sqrt{1 - b^2}, \quad B = \frac{1}{\sqrt{2}} \sqrt{1-b^2}, \quad a = a_0 + \sqrt{2} b t,
\quad b = b_0,
$$
where $a_0 \in \R$ and $b_0 \in (-1,1)$ are arbitrary
$t$-independent parameters. In view of the relation
$$
|v_1|^2 = A^2 + b^2 - A^2 {\rm sech}^2\left( \eps^{-1} B(t) (x - a(t)) \right),
$$
it is clear that $a$ is a center of the dark soliton,
$b$ its speed, $A$ determines its amplitude, and $B$ determines its width.
If the dark soliton is placed inside the confinement of the compact Thomas--Fermi
cloud (\ref{Thomas-Fermi}), then the constraint $a \in (-1,1)$ has to be added.

When $\eta_{\eps} \neq 1$, the trial function (\ref{1-soliton}) is no longer an exact
solution of (\ref{GP-renormalized}) but it becomes the best
approximate solution if parameters $(A,B,a,b)$ are chosen from the
Euler--Lagrange equations of the averaged Lagrangian
$L_1(A,B,a,b) = L(v_1)$. This variational method provides
a useful qualitative approximation to physicists for understanding
the dynamics of dark solitons under perturbations \cite{KK}.
Unlike the work of \cite{KK}, we do not need to renormalize the
Lagrangian function $L(v)$ thanks to the rapidly decaying weight
function $\eta_{\eps}^2(x)$ under the integration sign in
(\ref{energy-functional})--(\ref{Lagrangian}).

Let us choose $A = \sqrt{1 - b^2}$ to satisfy the boundary conditions
$$
\lim_{x \to \pm \infty} |v_1(x,t)| = 1 \quad \mbox{\rm for all}
\quad t \in \mathbb{R}.
$$
Substitution of ansatz (\ref{1-soliton}) to $L(v)$ and
integration in $\R$ results in the effective Lagrangian
\begin{eqnarray}
\nonumber
L(v_1) & = & \frac{\eps \dot{b}}{\sqrt{1-b^2}} \int_{\R}
\eta_{\eps}^2(x) \tanh(z) dx
+ b \sqrt{1-b^2} B \dot{a} \int_{\R} \eta_{\eps}^2(x) {\rm sech}^2(z) dx \\
\nonumber
& \phantom{t} & - \eps b \sqrt{1-b^2} \dot{B} B^{-1} \int_{\R} \eta_{\eps}^2(x) z {\rm sech}^2(z) dx
+ (1 - b^2) B^2 \int_{\R} \eta_{\eps}^2(x) {\rm sech}^4(z) dx \\
& \phantom{t} &  + \frac{1}{2} (1 - b^2)^2 \int_{\R}
\eta_{\eps}^4(x) {\rm sech}^4(z) dx, \label{averaged-Lagrangian-1}
\end{eqnarray}
where $z = \eps^{-1} B(x-a)$. Note the pointwise limits
\begin{equation}
\label{pointwise-bound}
\lim_{\eps \to 0} \tanh(z) = {\rm sign}(x-a), \quad
\lim_{\eps \to 0} {\rm sech}^2(z) = 0, \quad x \in \R \backslash\{0\},
\end{equation}
which show that $\lim_{\eps \to 0} L(v_1) = 0$. The
value of $L(v_1)$ in the limit of $\eps \to 0$ is computed in the following lemma.

\begin{lemma}
Assume that $B > 0$ and $a \in (-1,1)$. Then,
\begin{eqnarray*}
L_1 := \lim_{\eps \to 0} \frac{L(v_1)}{2\eps} & = &
-\frac{\dot{b}}{\sqrt{1-b^2}} (a - \frac{1}{3} a^3) + b \sqrt{1 -
b^2} (1 - a^2) \dot{a} \\
& \phantom{t} & + \frac{2}{3} (1 - a^2) (1 - b^2) B + \frac{1}{3B}
(1 - a^2)^2 (1 - b^2)^2.
\end{eqnarray*}
\label{proposition-1-soliton}
\end{lemma}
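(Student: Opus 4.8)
The plan is to evaluate each of the five integrals in (\ref{averaged-Lagrangian-1}) separately, using the pointwise limits (\ref{pointwise-bound}) together with the $C^1$ convergence $\eta_\eps \to \eta_0$ on compact subsets of $(-1,1)$ from (\ref{C-K-bound}), and to track which terms survive after dividing by $2\eps$. The overall prefactor $2\eps$ already hints at the structure: the first three integrals carry an explicit $\eps$, a factor $B$ or $B^{-1}$ from $\dot a$ or $\dot B$, so after division by $2\eps$ they contribute $O(1)$ quantities, while the last two integrals must be scaled so that the $\eps^{-1}$ coming from the width of the ${\rm sech}$-profile cancels against the division by $\eps$.

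**First** I would change variables to $z = \eps^{-1} B (x-a)$, i.e. $x = a + \eps B^{-1} z$ and $dx = \eps B^{-1} dz$, in each integral. For the two ${\rm sech}^4$ integrals this gives
\begin{eqnarray*}
B^2 \int_\R \eta_\eps^2 {\rm sech}^4(z)\, dx = \eps B \int_\R \eta_\eps^2(a + \eps B^{-1} z) {\rm sech}^4(z)\, dz, \\
\tfrac12 \int_\R \eta_\eps^4 {\rm sech}^4(z)\, dx = \tfrac{\eps}{2B} \int_\R \eta_\eps^4(a + \eps B^{-1} z) {\rm sech}^4(z)\, dz.
\end{eqnarray*}
Since $a \in (-1,1)$, for $\eps$ small the relevant mass of ${\rm sech}^4$ sits in a small neighbourhood of $a$ contained in a fixed compact $K \subset (-1,1)$; by (\ref{C-K-bound}), $\eta_\eps(a + \eps B^{-1} z) \to \eta_0(a) = (1-a^2)^{1/2}$ uniformly on bounded $z$-sets, and a tail estimate (using $\|\eta_\eps\|_{L^\infty} \le 1 + C\eps^{1/3}$ and the exponential decay of ${\rm sech}^4$) controls the contribution from large $z$. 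With $\int_\R {\rm sech}^4(z)\, dz = \tfrac43$, dominated convergence yields, after dividing by $2\eps$,
$$
\tfrac{2}{3}(1-a^2)(1-b^2) B \quad\text{and}\quad \tfrac{1}{3B}(1-a^2)^2(1-b^2)^2,
$$
matching the two B-dependent terms in the claim.

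**Next** come the three $\eps$-prefactored integrals. In $\int_\R \eta_\eps^2 \tanh(z)\, dx$ I would \emph{not} rescale; instead write $\tanh(z) = {\rm sign}(x-a) + (\tanh(z) - {\rm sign}(x-a))$, observe that the correction has $L^1$-norm $O(\eps)$ by rescaling, and compute $\int_\R \eta_0^2(x)\,{\rm sign}(x-a)\, dx = \int_a^1 (1-x^2)\, dx - \int_{-1}^a (1-x^2)\, dx = -2(a - \tfrac13 a^3)$; combined with the prefactor $\eps \dot b/\sqrt{1-b^2}$ and division by $2\eps$ this gives $-\tfrac{\dot b}{\sqrt{1-b^2}}(a - \tfrac13 a^3)$. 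For $B\dot a \int_\R \eta_\eps^2 {\rm sech}^2(z)\, dx$, rescaling gives $\eps \dot a \int_\R \eta_\eps^2(a + \eps B^{-1} z) {\rm sech}^2(z)\, dz$, and with $\int_\R {\rm sech}^2 = 2$ and $\eta_\eps^2 \to 1-a^2$ at $x=a$ the limit after dividing by $2\eps$ is $b\sqrt{1-b^2}(1-a^2)\dot a$. The third integral, $\int_\R \eta_\eps^2 z\,{\rm sech}^2(z)\, dx$, rescales to $\eps B^{-1}\int_\R \eta_\eps^2(a+\eps B^{-1}z)\, z\,{\rm sech}^2(z)\, dz$, which is $O(\eps^2)$ relative to the Lagrangian because $z\,{\rm sech}^2(z)$ is odd and the leading constant term in the Taylor expansion of $\eta_\eps^2$ integrates to zero against it; hence this term drops out of $L_1$, consistent with its absence from the claimed formula. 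Assembling the five pieces gives exactly the stated expression.

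**The main obstacle** is the uniform/tail control needed to pass the limit under the integral sign: the convergence $\eta_\eps \to \eta_0$ fails near $x = \pm 1$, where $\eta_0$ is only Hölder-$\tfrac12$ and only the weaker bound (\ref{L-infty-bound}) is available. One must argue that, since $a$ is strictly inside $(-1,1)$ and the ${\rm sech}$-type profiles are exponentially localized in $z$, the mass contributed from $|x \mp 1|$ small (equivalently $|z| \gtrsim \eps^{-1}$) is exponentially negligible, so the compact-set estimate (\ref{C-K-bound}) suffices on the region that matters and the crude $L^\infty$ bound handles the tails. A clean way to organize this is to fix $\delta > 0$ with $[a-\delta, a+\delta] \subset (-1,1)$, split each integral at $|x - a| = \delta$, apply $C^1$ convergence plus dominated convergence on the inner part and an exponential tail bound on the outer part, and then let $\eps \to 0$ followed by $\delta \to 0$. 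The remaining computations are the elementary integrals $\int {\rm sech}^2 = 2$, $\int {\rm sech}^4 = \tfrac43$, and $\int \eta_0^2\,{\rm sign}(x-a)\, dx$ recorded above.
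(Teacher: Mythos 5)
Your proposal is correct and follows essentially the same route as the paper's proof: a change of variables $x \to z$ in the ${\rm sech}$-type integrals, direct dominated convergence for the $\tanh$ integral, the same elementary values $\int {\rm sech}^2 = 2$, $\int {\rm sech}^4 = \tfrac43$, and the same identification of the $z\,{\rm sech}^2$ term as vanishing by oddness. The only difference is bookkeeping of the remainder: the paper writes $\eta_\eps^2 = \eta_0^2 + \eps^{1/3}R$ with $\|R\|_{L^\infty}\leq C$ from (\ref{L-infty-bound}) and expands $\eta_0^2(a+\eps z B^{-1})$ exactly into three terms with exponentially small tails, whereas you split at $|x-a|=\delta$ and use the compact-set $C^1$ bound (\ref{C-K-bound}) plus tail estimates, which is an equally valid way to justify passing to the limit.
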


\begin{proof}
Thanks to the limit (\ref{Thomas-Fermi}), the pointwise bound (\ref{pointwise-bound}),
and the Dominated Convergence Theorem, we have
$$
\lim_{\eps \to 0} \int_{\R} \eta_{\eps}^2(x) \tanh(z) dx =
\int_{-1}^1 (1 - x^2) {\rm sign}(x-a) dx = -2a + \frac{2}{3} a^3,
$$
To compute the remaining four integrals in (\ref{averaged-Lagrangian-1}), we use the change of variables $x \to z$, so that
\begin{eqnarray*}
\eps^{-1} B \int_{\R} \eta_{\eps}^2(x) {\rm sech}^2(z) dx & = &
\int_{\R} \eta_{\eps}^2\left(a + \eps z B^{-1}\right) {\rm
sech}^2(z) dz \\ & = & \int_{z_-}^{z_+} \eta_0^2\left(a + \eps z
B^{-1}\right) {\rm sech}^2(z) dz + \eps^{1/3} \int_{\R}
R_{\eps,B,a}(z) {\rm sech}^2(z) dz,
\end{eqnarray*}
where $z_{\pm} = \eps^{-1} B (\pm 1 - a)$ and the reminder term
satisfies the uniform bound $\| R_{\eps,B,a} \|_{L^{\infty}} \leq
C$ for some $C > 0$, thanks to the first bound (\ref{L-infty-bound}). As a result, the second term does
not contribute to the limit $\eps \to 0$. To deal with the first
term, we decompose the integral into three parts
\begin{eqnarray*}
(1-a^2) \int_{z_-}^{z_+} {\rm sech}^2(z) dz - 2 \eps a B^{-1}
\int_{z_-}^{z_+} z {\rm sech}^2(z) dz - \eps^2 B^{-2}
\int_{z_-}^{z_+} z^2 {\rm sech}^2(z) dz.
\end{eqnarray*}
We recall that the integral
$$
\int_{\alpha \eps^{-1}}^{\infty} z^k {\rm sech}^2(z) dz, \quad k \geq 0,
$$
is exponentially small in $\eps$ if $\alpha > 0$ is $\eps$-independent. As
a result, the second and third terms do not contribute to the limit $\eps \to 0$,
while the first term gives
$$
\lim_{\eps \to 0} \eps^{-1} B \int_{\R} \eta_{\eps}^2(x) {\rm
sech}^2(z) dx = (1-a^2) \int_{\R} {\rm sech}^2(z) dz =
2(1-a^2).
$$
The remaining three integrals in (\ref{averaged-Lagrangian-1})
are computed similarly to the second integral in (\ref{averaged-Lagrangian-1}) and give
\begin{eqnarray*}
\lim_{\eps \to 0} \eps^{-1} B \int_{\R} \eta_{\eps}^2(x) z {\rm sech}^2(z) dx
& = & (1-a^2) \int_{\R} z {\rm sech}^2(z) dz = 0, \\
\lim_{\eps \to 0} \eps^{-1} B \int_{\R} \eta_{\eps}^2(x) {\rm
sech}^4(z) dx & = & (1-a^2) \int_{\R} {\rm sech}^4(z) dz = \frac{4}{3}(1-a^2), \\
\lim_{\eps \to 0} \eps^{-1} B \int_{\R} \eta_{\eps}^4(x) {\rm
sech}^4(z) dx & = & (1-a^2)^2 \int_{\R} {\rm sech}^4(z) dz = \frac{4}{3}(1-a^2)^2.
\end{eqnarray*}
Combining all individual computations gives the result for $L_1$.
\end{proof}

Since $\dot{B}$ is absent in $L_1 := L_1(a,b,B)$, variation of $L_1$
with respect to $B$ gives an algebraic equation on $B$ with the exact solution
$$
B = \frac{1}{\sqrt{2}} \sqrt{1 - a^2} \sqrt{1 - b^2}.
$$
Eliminating $B$ from $L_1(a,b,B)$, we simplify the effective Lagrangian to the form
$$
L_1(a,b) = \frac{2 \sqrt{2}}{3} (1-a^2)^{3/2} (1-b^2)^{3/2} - 2
\sqrt{1-b^2} \dot{b} (a - \frac{1}{3} a^3) + \frac{d}{dt} \left[(a
- \frac{1}{3} a^3) b \sqrt{1-b^2} \right],
$$
where the last term is the full derivative. Since adding a full
derivative does not change the Euler--Lagrange equations, the last
term can be dropped from $L_1$. Variation with respect to $a$ and
$b$ give the following system of equations
$$
\dot{a} = \sqrt{2} \sqrt{1-a^2} b, \quad \dot{b} = -\frac{\sqrt{2} a (1-b^2)}{\sqrt{1-a^2}},
$$
which is equivalent to the linear oscillator equation
$$
\ddot{a} + 2 a = 0.
$$
The critical point $(a,b) = (0,0)$ corresponds to the
solution $V_1$ of the stationary equation (\ref{stationaryGP-renormalized}). Oscillations near the critical
point with frequency $\sqrt{2}$ corresponds to the oscillations of
the dark soliton $V_1$ relative to the positive ground state $\eta_{\eps}$ in the Thomas--Fermi
limit $\eps \to 0$; see e.g. \cite{KP04} and references therein.
This frequency was found to be the smallest nonzero
frequency in the spectrum of the spectral stability problem 
associated with the first excited state, see
Fig. 2 in \cite{PelKev}.

\section{2-solitons ($m = 2$)}
\label{section-2-soliton}

Let us now consider a superposition of two dark
solitons
\begin{eqnarray}
\nonumber v_2(x,t) & = & \left[ A_1(t) \; \tanh\left( \eps^{-1}
B_1(t) (x - a_1(t)) \right) + i b_1(t) \right] \\
& \phantom{t} & \times \left[ A_2(t) \; \tanh\left( \eps^{-1}
B_2(t) (x - a_2(t)) \right) + i b_2(t) \right],  \label{2-soliton}
\end{eqnarray}
where we shall use the relations for the individual dark solitons
$$
A_j = \sqrt{1 - b_j^2}, \quad B_j = \frac{1}{\sqrt{2}} \sqrt{1 - a_j^2} \sqrt{1
- b_j^2}, \quad j = 1,2.
$$
In-phase oscillations of two dark solitons are very similar to the oscillations
of one dark soliton and have the same frequency, as we will show in Section \ref{section-3-soliton}.
Therefore, we shall consider out-of-phase oscillations of two dark solitons
and choose
$$
a_1 = -a, \quad a_2 = a, \quad b_1 = -b,
\quad b_2 = b,
$$
with $a \in (0,1)$ and $b \in \R$. Substitution of $v_2$ to $\Lambda(v)$ gives
\begin{eqnarray*}
\Lambda(v_2) & = & A^2 B^2 \int_{\R} \eta_{\eps}^2(x) \left[ {\rm sech}^4 (z_+) +
{\rm sech}^4 (z_-) - 2 b^2 {\rm sech}^2 (z_+) {\rm sech}^2 (z_-) \right. \\
& \phantom{t} & \left. - A^2 {\rm sech}^2 (z_+) {\rm sech}^2 (z_-) \left( {\rm sech}^2 (z_+)
+ {\rm sech}^2 (z_-) - 2 \tanh (z_+) \tanh (z_-) \right) \right] dx \\
& + & \frac{1}{2} A^4 \int_{\R} \eta_{\eps}^4(x) \left[ {\rm sech}^4 (z_+) + {\rm sech}^4 (z_-) + 2 {\rm sech}^2 (z_+) {\rm sech}^2 (z_-) \right. \\
& \phantom{t} & \left. - 2 A^2 {\rm sech}^2 (z_+) {\rm sech}^2 (z_-) \left( {\rm sech}^2 (z_+)
+ {\rm sech}^2 (z_-) \right) + A^4 {\rm sech}^4 (z_+) {\rm sech}^4 (z_-) \right] dx,
\end{eqnarray*}
where $z_{\pm} = \eps^{-1} B (x \pm a)$. The integrals that only depend
on $z_+$ or $z_-$ are computed similarly to the case of $1$-soliton. The overlapping integrals that depend on both
$z_+$ and $z_-$ are computed under the apriori assumption
\begin{equation}
\label{assumption-derivation} a \leq C_1 \eps^{1/6}, \quad e^{-4 B a \eps^{-1}} \leq C_2 \eps^2 \log(\eps),
\end{equation}
for some $C_1,C_2 > 0$ and sufficiently small $\eps > 0$. As we will see later, the apriori
assumption allows us to recover the equilibrium state of two dark solitons
and to study perturbations near the equilibrium.

After simplifications, one can write
\begin{eqnarray*}
\Lambda_2 := \frac{\Lambda(v_2)}{2\eps} = \Lambda_+ + \Lambda_- +
\Lambda_{\rm overlap},
\end{eqnarray*}
where
\begin{eqnarray*}
\Lambda_{\pm} := \frac{A^2 B^2}{2 \eps} \int_{\R} \eta_{\eps}^2(x)
{\rm sech}^4(z_{\pm}) dx + \frac{A^4}{4 \eps} \int_{\R}
\eta_{\eps}^4(x) {\rm sech}^4(z_{\pm}) dx
\end{eqnarray*}
and
\begin{eqnarray*}
\Lambda_{\rm overlap} & = & -\frac{A^2 B^2}{2\eps} \int_{\R} \eta_{\eps}^2(x)
{\rm sech}^2 (z_+) {\rm sech}^2 (z_-)
\\ & \phantom{t} & \phantom{text} \times \left[ 2 b^2 + A^2
\left( {\rm sech}^2 (z_+) + {\rm sech}^2 (z_-) - 2 \tanh (z_+) \tanh (z_-) \right) \right] dx \\
& \phantom{t} & + \frac{A^4}{4\eps} \int_{\R} \eta_{\eps}^4(x) {\rm
sech}^2 (z_+) {\rm sech}^2 (z_-) \\
& \phantom{t} & \phantom{text} \times \left[ 2 - 2 A^2 \left( {\rm sech}^2
(z_+) + {\rm sech}^2 (z_-) \right) + A^4 {\rm sech}^2 (z_+) {\rm
sech}^2 (z_-) \right] dx.
\end{eqnarray*}
The terms $\Lambda_{\pm}$ are the potential energies of the
individual dark solitons and the term $\Lambda_{\rm overlap}$
contains overlapping integrals. By Lemma
\ref{proposition-1-soliton}, we have
\begin{eqnarray*}
\Lambda_{\pm} = \frac{4 (1 - a^2)^{3/2} (1 -
b^2)^{3/2}}{3 \sqrt{2}}  + {\cal O}(\eps^{1/3}).
\end{eqnarray*}
The overlapping integrals for small
$\eps$ are computed in the following lemma.

\begin{lemma}
\label{proposition-2-soliton}
Assume that $a \in (0,1)$ satisfies (\ref{assumption-derivation}), $b \in \R$, and
$$
A = \sqrt{1 - b^2}, \quad B = \frac{1}{\sqrt{2}} \sqrt{1 - a^2}
\sqrt{1 - b^2}.
$$
Then,
\begin{eqnarray*}
\Lambda_{\rm overlap} = - 8 \sqrt{2} (1 - a^2)^{3/2} (1 -
b^2)^{5/2} \; e^{-4 B a \eps^{-1}} \left( 1 + {\cal
O}(\eps^{1/3}) \right).
\end{eqnarray*}
\end{lemma}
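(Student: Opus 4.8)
The plan is to convert $\Lambda_{\rm overlap}$ into elementary integrals, freeze $\eta_\eps$ at the centre of the shrinking overlap region, and then perform an asymptotic expansion in which the naively leading terms nearly cancel. Substituting $\zeta:=\eps^{-1}Bx$, the two arguments become $z_\pm=\zeta\pm\delta$ with $\delta:=\eps^{-1}Ba$; setting $c:=\cosh 2\delta$ and $u:=\cosh 2\zeta$, the product identities $\cosh z_+\cosh z_-=(u+c)/2$ and $\sinh z_+\sinh z_-=(u-c)/2$ turn every trigonometric combination appearing in $\Lambda_{\rm overlap}$ into a rational function of $u$ and $c$. In particular ${\rm sech}^2 z_+\,{\rm sech}^2 z_-=4(u+c)^{-2}$ and, using $A^2+b^2=1$, the bracket in the $\eta_\eps^2$ integral equals $2-4A^2(u^2-1)(u+c)^{-2}$, with an analogous rational formula for the bracket in the $\eta_\eps^4$ integral; hence each overlap integral equals $\frac1B\int_\R\eta_\eps^{2k}(\eps\zeta/B)\,R_j(u;c)\,d\zeta$ for explicit rational $R_j$.

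Next I would replace $\eta_\eps^{2k}(\eps\zeta/B)$ by $1$. The weight $4(u+c)^{-2}$ concentrates on $u\lesssim c$, i.e.\ on $|x|\lesssim a\le C_1\eps^{1/6}$, a neighbourhood of the origin shrinking to a point, where (\ref{C-K-bound})--(\ref{L-infty-bound}) give $\eta_\eps^{2k}(\eps\zeta/B)=1+O(a^2)=1+O(\eps^{1/3})$ uniformly; on the complement $\eta_\eps$ is uniformly bounded and the integrands are exponentially smaller, so the replacement costs only $O(\eps^{1/3})$ relative to $e^{-4\delta}$. This gives $\Lambda_{\rm overlap}=\frac{1}{4B}\big(-2A^2B^2\mathcal I_1+A^4\mathcal I_2\big)\big(1+O(\eps^{1/3})\big)$ with $\mathcal I_j:=\int_\R R_j(u;c)\,d\zeta$. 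Reducing the $(u+c)^{-n}$ by $u^2-1=(u+c)^2-2c(u+c)+(c^2-1)$ and similar identities expresses $\mathcal I_1,\mathcal I_2$ through $J_n(c):=\int_\R(\cosh 2\zeta+c)^{-n}\,d\zeta$; since $J_1=\beta/\sinh\beta$ with $c=\cosh\beta$, $\beta=2\delta$, and $J_n=\frac{(-1)^{n-1}}{(n-1)!}\frac{d^{n-1}}{dc^{n-1}}J_1$, one obtains $J_2,J_3,J_4$ in closed form as rational functions of $\beta$, $\sinh\beta$, $\cosh\beta$.

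Substituting these together with $B^2=(1-a^2)A^2/2$ makes $-2A^2B^2\mathcal I_1+A^4\mathcal I_2$ an explicit function of $\beta$, which I would expand as $\beta\to\infty$. Its contributions of size $\beta\,e^{-2\beta}=2\delta\,e^{-4\delta}$ cancel up to a remainder proportional to $a^2$, leaving the leading term $-8A^6B^{-1}e^{-4\delta}$; since $8A^6B^{-1}=8\sqrt2\,(1-b^2)^{5/2}(1-a^2)^{-1/2}$ and $a=O(\eps^{1/6})$, this equals $-8\sqrt2\,(1-a^2)^{3/2}(1-b^2)^{5/2}\,e^{-4\delta}\big(1+O(\eps^{1/3})\big)$. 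The remaining pieces — the next exponential order $O(\beta\,e^{-8\delta})$ and the leftover $O(a^2\delta\,e^{-4\delta})$ term — are absorbed into the factor $1+O(\eps^{1/3})$ using the a priori assumptions in (\ref{assumption-derivation}).

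The main obstacle is this last expansion: the cancellation of the $\delta\,e^{-4\delta}$-order contributions and the extraction of the precise coefficient of the surviving $e^{-4\delta}$ term require the asymptotics of $J_2,J_3,J_4$ (and of $c$, $c^2$) to one order past the naive leading term, which is bookkeeping-heavy and where sign or combinatorial slips are easy to make. A secondary care point is checking that the two assumptions in (\ref{assumption-derivation}) are exactly what is needed to push the $\delta$-weighted leftover and the higher exponential order below the claimed $O(\eps^{1/3})$ relative error.
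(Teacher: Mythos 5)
Your proposal is correct in substance and reaches the right answer, but it takes a genuinely different computational route from the paper. The paper substitutes $x\to z_-$, writes each of the four overlap integrals ($\mathrm{sech}^2z_+\mathrm{sech}^2z_-$, the $\mathrm{sech}^2\mathrm{sech}^2(\mathrm{sech}^2+\mathrm{sech}^2)$ combination, the $\tanh z_+\tanh z_-$ term, and $\mathrm{sech}^4z_+\mathrm{sech}^4z_-$) explicitly in exponentials, splits the domain at $\pm Ba\eps^{-1}$, and extracts asymptotics of the form $C\,e^{-4\delta}(\alpha\delta+\beta)$ term by term; the cancellation of the $\delta e^{-4\delta}$ contributions then happens when the four results are recombined with the prefactors $-\tfrac{A^2B^2}{2\eps}$ and $\tfrac{A^4}{4\eps}$, using $A^2+b^2=1$ and $B^2=\tfrac12(1-a^2)A^2$. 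Your route instead converts every integrand to a rational function of $u=\cosh2\zeta$ and $c=\cosh2\delta$ and evaluates everything in closed form through $J_n(c)=\tfrac{(-1)^{n-1}}{(n-1)!}\partial_c^{n-1}(\beta/\sinh\beta)$; I checked that your algebraic identities (e.g.\ the bracket $2-4A^2(u^2-1)(u+c)^{-2}$) are right, that $J_2\sim4(2\delta-1)e^{-4\delta}$, $cJ_3\sim2(4\delta-3)e^{-4\delta}$, $(c^2-1)J_4\sim8(\delta-\tfrac{11}{12})e^{-4\delta}$ reproduce exactly the paper's individual integral asymptotics, and that the combination collapses to $-16A^4B\,e^{-4\delta}=-8A^6B^{-1}(1-a^2)e^{-4\delta}$, which agrees with the stated $-8\sqrt2(1-a^2)^{3/2}(1-b^2)^{5/2}e^{-4\delta}$ up to the admissible $1+{\cal O}(a^2)$ factor. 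What your approach buys is closed forms for all the overlap integrals at once (and to all orders in $e^{-\beta}$ if desired), at the cost that freezing $\eta_\eps\equiv1$ makes the $\delta e^{-4\delta}$ cancellation only approximate, leaving the ${\cal O}(a^2\delta e^{-4\delta})$ residue you correctly flag; the paper avoids that particular residue by retaining the exact $(1-a^2)$ factors from $\eta_0^2$, though it incurs an equivalent ${\cal O}(\eps^{1/3}\delta)$ relative error from its own remainder terms $I_2,I_3,I_4$, so neither argument is sharper than the other on this point. The one thing to be candid about is that your write-up stops short of performing the final expansion and openly labels it the main obstacle; since the mechanism, the tools, and the leading coefficient you announce all check out, this is a matter of unexecuted bookkeeping rather than a missing idea.
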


\begin{proof}
To compute the overlapping integrals, we use the symmetry of the integrand and the change of
variables $x \to z_-$. The first overlapping integral in $\Lambda_{\rm overlap}$ is given by
\begin{eqnarray*}
\eps^{-1} B \int_{\R} \eta^2_{\eps}(x) {\rm sech}^2 (z_+) {\rm sech}^2 (z_-) dx =
2 \int_{-Ba \eps^{-1}}^{\infty}
\eta_{\eps}^2 \left( a + \eps z B^{-1}\right) {\rm sech}^2 (z) {\rm sech}^2(z + 2 B a \eps^{-1}) dz,
\end{eqnarray*}
where $z \equiv z_-$. Similarly to the proof of Lemma \ref{proposition-1-soliton}, we break the integral
into four parts
\begin{eqnarray*}
2 (1 - a^2) \int_{-Ba \eps^{-1}}^{B (1-a) \eps^{-1}}
{\rm sech}^2 (z) {\rm sech}^2(z + 2 B a \eps^{-1}) dz \\
- 4 a \eps B^{-1} \int_{-B a \eps^{-1}}^{B (1-a) \eps^{-1}}
z {\rm sech}^2 (z) {\rm sech}^2(z + 2 B a \eps^{-1}) dz \\
- 2 \eps^2 B^{-2} \int_{-B a \eps^{-1}}^{B (1-a) \eps^{-1}}
z^2 {\rm sech}^2 (z) {\rm sech}^2(z + 2 B a \eps^{-1}) dz \\
+ 2 \eps^{1/3} \int_{-B a \eps^{-1}}^{\infty}
R_{\eps,B,a}(z) {\rm sech}^2 (z) {\rm sech}^2(z + 2 B a \eps^{-1}) dz,
\end{eqnarray*}
where the reminder term satisfies the bound $\| R_{\eps,B,a}
\|_{L^{\infty}} \leq C$ for some $C > 0$, thanks to the bound (\ref{L-infty-bound}).
The first part gives the leading
order of the integral according to the explicit calculation
\begin{eqnarray*}
I_1 & = & \int_{-B a \eps^{-1}}^{B (1-a) \eps^{-1}}
{\rm sech}^2 (z) {\rm sech}^2(z + 2 B a \eps^{-1}) dz \\
& = & 16 \left( \int_{-B a \eps^{-1}}^{B a \eps^{-1}} + \int_{B a \eps^{-1}}^{B (1-a) \eps^{-1}} 
\right) \frac{e^{-4z -4B a \eps^{-1}}}{(1+e^{-2z})^2 (1 + e^{-2z-4B a \eps^{-1}})^2} dz
\end{eqnarray*}
We have
$$
0 \leq e^{-2z-4B a \eps^{-1}} \leq e^{-2B a \eps^{-1}}, \quad z \geq -B a \eps^{-1},
$$
and
$$
e^{-B (1-a) \eps^{-1}} \ll e^{-B a \eps^{-1}}, \quad a \leq C \eps^{1/6},
$$
so that
\begin{eqnarray*}
I_1 & = & 16 e^{-4B a \eps^{-1}} \left( \int_{-B a \eps^{-1}}^{B a \eps^{-1}}
\frac{e^{-4z}}{(1+e^{-2z})^2} dz \right) \left( 1 + {\cal O}\left(e^{-2B a \eps^{-1}}\right) \right) 
+ {\cal O}\left(e^{-8 B a \eps^{-1}}\right) \\
& = & 8 e^{-4B a \eps^{-1}}
\left( 2 B a \eps^{-1} - 1 \right) \left( 1 + {\cal O}\left(e^{-2B a \eps^{-1}}\right) \right).
\end{eqnarray*}
The second part of the overlapping integral is computed from the explicit computation
\begin{eqnarray*}
I_2 & = & a \eps \int_{-B a \eps^{-1}}^{B (1-a) \eps^{-1}} z {\rm sech}^2 (z) {\rm sech}^2(z + 2 B a \eps^{-1}) dz \\
& = & a \eps \left( \int_{-B a \eps^{-1}}^{B a \eps^{-1}} + \int_{B a \eps^{-1}}^{B (1-a) \eps^{-1}} \right)
z {\rm sech}^2 (z) {\rm sech}^2(z + 2 B a \eps^{-1}) dz \\
& = & {\cal O}(a^2 I_1) + {\cal O}\left(e^{-6B a \eps^{-1}}\right) = {\cal O}(a^2 I_1), \\
\end{eqnarray*}
The last two parts of the overlapping integrals are computed similarly and yield
\begin{eqnarray*}
I_3 & = & \eps^2 \int_{-B a \eps^{-1}}^{B (1-a) \eps^{-1}}
z^2 {\rm sech}^2 (z) {\rm sech}^2(z + 2 B a \eps^{-1}) dz = {\cal O}(a^2 I_1), \\
I_4 & = & \eps^{1/3} \int_{-B a \eps^{-1}}^{\infty}
R_{\eps,B,a}(z) {\rm sech}^2 (z) {\rm sech}^2(z + 2 B a \eps^{-1}) dz = {\cal O}(\eps^{1/3} I_1).
\end{eqnarray*}
Under the assumption (\ref{assumption-derivation}), we have
$$
e^{-2B a \eps^{-1}} = {\cal O}(\eps \log^{1/2}(\eps)) \quad \mbox{\rm and} \quad
a^2 = {\cal O}(\eps^{1/3}),
$$
so that we finally obtain
$$
\eps^{-1} B \int_{\R} \eta^2_{\eps}(x) {\rm sech}^2 (z_+) {\rm sech}^2 (z_-) dx =
16 (1 - a^2) e^{-4B a \eps^{-1}} \left( 2 B a \eps^{-1} - 1 \right)
\left( 1 + {\cal O}(\eps^{1/3}) \right).
$$
Similarly, we compute the other overlapping integrals in $\Lambda_{\rm overlap}$ as follows:
\begin{eqnarray*}
& \phantom{t} & \eps^{-1} B \int_{\R} \eta_{\eps}^2(x) {\rm
sech}^2 (z_+) {\rm sech}^2 (z_-) \left( {\rm sech}^2 (z_+) + {\rm
sech}^2 (z_-) \right) dx    \\
& \phantom{t} & \phantom{texttext} = \frac{64}{3} (1 - a^2)
e^{-4B a \eps^{-1}} \left( 1 + {\cal O}(\eps^{1/3}) \right),
\end{eqnarray*}
\begin{eqnarray*}
& \phantom{t} & \eps^{-1} B \int_{\R} \eta_{\eps}^2(x) {\rm
sech}^2 (z_+) {\rm sech}^2 (z_-) {\rm tanh} (z_+) {\rm tanh} (z_-)
dx \\ & \phantom{t} & \phantom{texttext} = 32 (1 - a^2)
e^{-4B a \eps^{-1}} \left( - B a \eps^{-1} + 1 \right)
\left( 1 + {\cal O}(\eps^{1/3}) \right),
\end{eqnarray*}
and
\begin{eqnarray*}
\eps^{-1} B \int_{\R} \eta_{\eps}^4(x) {\rm sech}^4 (z_+) {\rm sech}^4 (z_-) dx
= 512 (1 - a^2)^2 e^{-8 B a \eps^{-1}} \left( B a \eps^{-1} - \frac{11}{12} \right)
\left( 1 + {\cal O}(\eps^{1/3}) \right).
\end{eqnarray*}
Combining these computations together, we obtain the expression
for $\Lambda_{\rm overlap}$.
\end{proof}

Variations of $\Lambda_2(a,b)$ define critical points that
correspond to the solution $V_2(x)$ of the stationary
equation (\ref{stationaryGP-renormalized}). Since $\Lambda_2$
is even in $b \in \R$, the set of critical points includes $b =
0$. Note that $v_2(x,t)$ in (\ref{2-soliton}) is real if $b = 0$,
which agree with $V_2(x)$ being real-valued.

Since $\Lambda_+ + \Lambda_-$ is even in $a$ and the
overlapping integral is small under assumption
(\ref{assumption-derivation}), variation of $\Lambda_2(a,0)$
in $a$ gives a root finding problem
\begin{equation}
\label{nonlinear-equation-a}
-4 \sqrt{2} \eps a \left( 1 + {\cal O}(\eps^{1/3}) \right) +
32 e^{-2\sqrt{2} a \eps^{-1}} \left( 1 + {\cal O}(\eps^{1/3}) \right) = 0.
\end{equation}
The asymptotic
analysis of the roots of the nonlinear equation
(\ref{nonlinear-equation-a}) in the following lemma shows that the apriori
assumption (\ref{assumption-derivation}) is indeed satisfied.

\begin{lemma}
For sufficiently small $\eps > 0$, there exists a simple root
of the nonlinear equation  (\ref{nonlinear-equation-a})
in the neighborhood of $0$, which is expanded by
\begin{equation}
\label{asymptotic-a-result}
a = \frac{\eps}{\sqrt{2}} \left( -\log(\eps) - \frac{1}{2}
\log|\log(\eps)| + \frac{3}{2} \log(2) + o(1) \right) \quad {\rm as} \quad \eps \to 0.
\end{equation}
\label{proposition-roots}
\end{lemma}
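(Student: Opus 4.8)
The plan is to rescale (\ref{nonlinear-equation-a}) into a single Lambert--type transcendental equation, prove existence, uniqueness and simplicity of its root by a monotonicity argument, and extract the iterated--logarithm expansion (\ref{asymptotic-a-result}) by bootstrapping. First I would substitute $a = \frac{\eps}{\sqrt{2}}\,s$, so that $2\sqrt{2}\,a\,\eps^{-1}=2s$ and $4\sqrt{2}\,\eps\,a = 4\es s$. Dividing (\ref{nonlinear-equation-a}) by $4\es$ then turns it into $s\,e^{2s} = 8\,\es^{-1}\left(1+{\cal O}(\eps^{1/3})\right)$, equivalently, after taking logarithms,
$$
2s + \log s = -2\log(\eps) + 3\log(2) + {\cal O}(\eps^{1/3}).
$$
A root of (\ref{nonlinear-equation-a}) near $0$ corresponds exactly to a positive root $s=s(\eps)$ of this equation with $\eps s \to 0$ as $\eps\to 0$.

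Let $\Phi(s,\eps)$ denote the difference of the two sides of the logarithmic equation. For each sufficiently small $\eps>0$ the function $\Phi(\cdot,\eps)$ is continuous on $(0,\infty)$, tends to $-\infty$ as $s\to 0^+$ and to $+\infty$ as $s\to\infty$, hence it has a root; moreover any root must lie in the window $[\tfrac12|\log\eps|,\,2|\log\eps|]$, since outside it the two sides of $s\,e^{2s}=8\,\es^{-1}(1+o(1))$ cannot balance. On that window $\partial_s(2s+\log s)=2+s^{-1}\geq 2$ dominates the $s$-derivative of the ${\cal O}(\eps^{1/3})$ remainders, which remain ${\cal O}(\eps^{1/3})$ in $C^1$ because the remainder terms in Lemmas \ref{proposition-1-soliton} and \ref{proposition-2-soliton} come from integrals against the $C^{\infty}$ profile $\eta_{\eps}$ and depend smoothly, with controlled derivatives, on $a$. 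Thus $\Phi(\cdot,\eps)$ is strictly increasing on the window, the root $s(\eps)$ is unique, and $\partial_s\Phi\neq 0$ at the root makes it simple; translating back, $a(\eps)=\frac{\eps}{\sqrt2}\,s(\eps)$ is the unique simple root of (\ref{nonlinear-equation-a}) in a neighbourhood of $0$.

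For the expansion I would bootstrap. Write $T:=-2\log(\eps)+3\log(2)\to+\infty$. Dropping $\log s$ from $2s+\log s=T+{\cal O}(\eps^{1/3})$ gives $s=\tfrac12 T\,(1+o(1))=-\log(\eps)\,(1+o(1))$, hence $s=|\log(\eps)|(1+o(1))$ and $\log s = \log|\log(\eps)| + o(1)$. Substituting this back into $2s = T - \log s + {\cal O}(\eps^{1/3})$ yields
$$
2s = -2\log(\eps) + 3\log(2) - \log|\log(\eps)| + o(1),
$$
so $s = -\log(\eps) - \tfrac12\log|\log(\eps)| + \tfrac32\log(2) + o(1)$ and therefore
$$
a = \frac{\eps}{\sqrt{2}}\,s = \frac{\eps}{\sqrt{2}}\left(-\log(\eps) - \tfrac12\log|\log(\eps)| + \tfrac32\log(2) + o(1)\right),
$$
which is (\ref{asymptotic-a-result}). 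Equivalently one may write $s=\tfrac12 W\!\left(16\,\es^{-1}(1+{\cal O}(\eps^{1/3}))\right)$ and use the asymptotics $W(y)=\log y - \log\log y + o(1)$ of the Lambert function.

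It remains to check that this root obeys the apriori assumption (\ref{assumption-derivation}): indeed $a={\cal O}(\eps|\log\eps|)=o(\eps^{1/6})$, and since $B=\frac{1}{\sqrt2}\sqrt{1-a^2}$ at $b=0$ we get $4Ba\eps^{-1}=2s(1+{\cal O}(a^2))=2s+o(1)$, hence $e^{-4Ba\eps^{-1}}=e^{-2s}(1+o(1))={\cal O}(\es|\log\eps|)$, as required. The leading--order computation here is routine; the two places that need care are (i) the simplicity and uniqueness claim in the presence of the ${\cal O}(\eps^{1/3})$ corrections --- which is precisely why the monotonicity estimate, and with it the $C^1$ control of the remainders, is needed --- and (ii) the bookkeeping of the nested logarithms, so that the Lagrangian error ${\cal O}(\eps^{1/3})$ and the iteration error ${\cal O}(\log|\log\eps|/|\log\eps|)$ both collapse into the single $o(1)$ in (\ref{asymptotic-a-result}).
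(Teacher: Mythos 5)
Your proposal is correct and follows essentially the same route as the paper: take the logarithm of (\ref{nonlinear-equation-a}), rescale $a$ by $\eps/\sqrt{2}$, and extract the nested-logarithm terms by a two-step iteration (your bootstrap for $s$ mirrors the paper's successive substitutions $U$ and $V$). The only difference is technical: you establish existence, uniqueness and simplicity of the root via a monotonicity/intermediate-value argument on the Lambert-type equation $s e^{2s}=8\eps^{-2}(1+{\cal O}(\eps^{1/3}))$, where the paper invokes the Implicit Function Theorem twice on the rescaled variables; both are valid, and your explicit remark that the ${\cal O}(\eps^{1/3})$ remainders need $C^1$ control in $a$ is a point the paper leaves implicit.
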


\begin{proof}
Taking a natural logarithm of the nonlinear equation (\ref{nonlinear-equation-a}), we obtain
$$
2 \sqrt{2} a + \eps \log(a) = -\eps \log(\eps) + \frac{5}{2} \eps
\log(2) + {\cal O}(\eps^{4/3}).
$$
Let $a = -\frac{1}{\sqrt{2}} \eps \log(\eps) U$ and rewrite the
problem for $U$:
\begin{equation}
\label{U-leading-order}
U - \frac{\log(U)}{2 \log(\eps)} =  1 + \frac{\log|\log(\eps)|}{2
\log(\eps)} - \frac{3\log(2)}{2 \log(\eps)} \left( 1 + {\cal
O}(\eps^{1/3}) \right).
\end{equation}
By the Implicit Function Theorem applied to
equation (\ref{U-leading-order}), existence of a unique root $U(\eps)$ in a
one-sided neighborhood of $\eps > 0$ is proved, where $U(\eps)$ is
continuous in $\eps > 0$ and $\lim_{\eps \downarrow 0} U(\eps) =
1$. To estimate the remainder term for $|U(\eps) -1|$, one can further decompose
$$
U = 1 + \frac{\log|\log(\eps)|}{2 \log(\eps)} (1 + V)
$$
and rewrite the problem for $V$:
\begin{equation}
\label{V-leading-order}
V - \frac{\log\left( 1 + \frac{\log|\log(\eps)|}{2 \log(\eps)} (1
+ V) \right)}{\log|\log(\eps)|} = - \frac{3 \log(2)}{\log
|\log(\eps)|} \left( 1 + {\cal O}(\eps^{1/3}) \right).
\end{equation}
By the Implicit Function Theorem applied again to equation (\ref{V-leading-order}),
existence of a unique root $V(\eps)$ in a
one-sided neighborhood of $\eps > 0$ is proved, where $V(\eps)$ is
continuous in $\eps > 0$ and $\lim_{\eps \downarrow 0} V(\eps) =
0$. Substitution of $U$ back to formula for $a$ gives (\ref{asymptotic-a-result}).
\end{proof}

By Lemma \ref{proposition-roots}, we can study temporal dynamics of two dark solitons near the bound
state that corresponds to a small root of the nonlinear equation
(\ref{nonlinear-equation-a}).

To proceed with time-derivative terms, we substitute
(\ref{2-soliton}) to the kinetic part $K(v)$ in (\ref{Lagrangian})
and find that
$$
K_2 := \frac{K(v_2)}{2\eps} = K_+ + K_- + K_{\rm overlap},
$$
where
\begin{eqnarray*}
K_{\pm} & = & \mp \frac{\dot{b}}{2 \sqrt{1-b^2}}
\int_{\R} \eta_{\eps}^2(x) \tanh(z_{\pm}) dx + \frac{b
\sqrt{1-b^2} B \dot{a}}{2\eps}
\int_{\R} \eta_{\eps}^2(x) {\rm sech}^2(z_{\pm}) dx \\
& \phantom{t} & \pm \frac{b \sqrt{1-b^2} \dot{B}}{2 B}
\int_{\R} \eta_{\eps}^2(x) z_{\pm} {\rm sech}^2(z_{\pm}) dx
\end{eqnarray*}
and
\begin{eqnarray*}
K_{\rm overlap} & = & \frac{1}{2} \dot{b} (1-b^2)^{1/2}
\int_{\R}
\eta_{\eps}^2(x) \left( \tanh(z_+) {\rm sech}^2(z_-) - \tanh(z_-) {\rm sech}^2(z_+) \right) dx \\
& \phantom{t} & - \eps^{-1} b (1-b^2)^{3/2} (B
\dot{a} + \dot{B} a) \int_{\R} \eta_{\eps}^2(x) {\rm
sech}^2(z_+) {\rm sech}^2(z_-) dx.
\end{eqnarray*}
The terms $K_{\pm}$ are the kinetic energies of the individual
dark solitons and the term $K_{\rm overlap}$ contains overlapping
integrals. By Lemma \ref{proposition-1-soliton}, we have
$$
\lim_{\eps \to 0} (K_+ + K_-) = - 4 \sqrt{1-b^2} \dot{b}
(a - \frac{1}{3} a^3) + 2 \frac{d}{dt} \left[(a -
\frac{1}{3} a^3) b \sqrt{1-b^2} \right].
$$
The overlapping integrals for small
$\eps$ are estimated in the following lemma.

\begin{lemma}
\label{proposition-3-soliton}
Assume that $a \in (0,1)$ satisfies (\ref{assumption-derivation}), $b \in \R$, and
$$
A = \sqrt{1 - b^2}, \quad B = \frac{\sqrt{1 - a^2}
\sqrt{1 - b^2}}{\sqrt{2}}.
$$
Then,
\begin{eqnarray*}
K_{\rm overlap} & = & 2 \eps \dot{b} (1-b^2)^{1/2} B^{-1} (1 - a^2)
\left( 1 + {\cal O}(\eps^{1/3}) \right) \\
& \phantom{t} & - 16 b (1-b^2)^{3/2} (\dot{a} + B^{-1} \dot{B} a) (1 - a^2)
 e^{-4B a \eps^{-1}} \left( 2 B a \eps^{-1} - 1 \right)
\left( 1 + {\cal O}(\eps^{1/3}) \right).
\end{eqnarray*}
\end{lemma}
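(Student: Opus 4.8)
The plan is to treat the two overlapping integrals appearing in $K_{\rm overlap}$ separately. The second one is exponentially small and has essentially already been computed in the proof of Lemma \ref{proposition-2-soliton}; the first one is only algebraically (not exponentially) small in $\eps$, and it requires a short new computation built from the same ingredients as Lemmas \ref{proposition-1-soliton} and \ref{proposition-2-soliton}.

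\emph{The exponentially small term.} From the proof of Lemma \ref{proposition-2-soliton} one has
$$
\eps^{-1} B \int_{\R} \eta_{\eps}^2(x)\, {\rm sech}^2 (z_+)\, {\rm sech}^2 (z_-)\, dx = 16 (1-a^2)\, e^{-4 B a \eps^{-1}} \left( 2 B a \eps^{-1} - 1 \right)\left( 1 + {\cal O}(\eps^{1/3}) \right),
$$
hence $\int_{\R} \eta_{\eps}^2\, {\rm sech}^2 (z_+)\, {\rm sech}^2 (z_-)\, dx = 16 \eps B^{-1} (1-a^2) e^{-4 B a \eps^{-1}}(2 B a \eps^{-1} - 1)(1 + {\cal O}(\eps^{1/3}))$. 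Substituting this into the term $-\eps^{-1} b (1-b^2)^{3/2}(B\dot a + \dot B a)\int_{\R} \eta_{\eps}^2\, {\rm sech}^2 (z_+)\, {\rm sech}^2 (z_-)\, dx$ of $K_{\rm overlap}$ and using $(B\dot a + \dot B a) B^{-1} = \dot a + B^{-1}\dot B a$ immediately yields the second line of the claimed formula.

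\emph{The algebraically small term.} I would compute
$$
{\cal A} := \int_{\R} \eta_{\eps}^2(x)\left( \tanh(z_+)\, {\rm sech}^2 (z_-) - \tanh(z_-)\, {\rm sech}^2 (z_+) \right) dx
$$
by the change of variables of Lemmas \ref{proposition-1-soliton}--\ref{proposition-2-soliton}: with $z = z_- = \eps^{-1} B(x-a)$ one has $z_+ = z + w$, $w := 2 B a \eps^{-1}$, $dx = \eps B^{-1}\, dz$, so that ${\cal A} = \eps B^{-1}\int_{\R} \eta_{\eps}^2(a + \eps z B^{-1})\, g(z)\, dz$ with $g(z) := \tanh(z+w){\rm sech}^2(z) - \tanh(z){\rm sech}^2(z+w)$. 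As before I would replace $\eta_{\eps}^2$ by $\eta_0^2$ up to an ${\cal O}(\eps^{1/3})$ relative error (using $\|\eta_{\eps}-\eta_0\|_{L^{\infty}}\leq C\eps^{1/3}$ together with $\|g\|_{L^1(\R)}={\cal O}(1)$), restrict the integral to the interval $z\in(-(1+a)B\eps^{-1},(1-a)B\eps^{-1})$ on which $\eta_0(a+\eps zB^{-1})\neq 0$ — the truncation error is exponentially small since the two "bumps" of $g$, near $z=0$ and near $z=-w$, both lie in this interval because $a<1$ — and then expand $\eta_0^2(a+\eps zB^{-1})=(1-a^2)-2a\eps zB^{-1}-\eps^2z^2B^{-2}$. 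Here it is convenient that $g(-z-w)=g(z)$, i.e. $g$ is symmetric about $z=-w/2$, and that the interval $(-(1+a)B\eps^{-1},(1-a)B\eps^{-1})$ is likewise symmetric about $z=-aB\eps^{-1}=-w/2$; this gives $\int z\,g(z)\,dz = -\tfrac{w}{2}\int g(z)\,dz$, which together with the $z^2$-term contributes a correction of relative size ${\cal O}(a^2)={\cal O}(\eps^{1/3})$ under the a priori assumption (\ref{assumption-derivation}). For the leading term, integration by parts gives $\int_{\R}\tanh(z+w){\rm sech}^2(z)\,dz=-\int_{\R}\tanh(z){\rm sech}^2(z+w)\,dz$ since $[\tanh(z+w)\tanh(z)]_{-\infty}^{\infty}=0$, whence $\int_{\R} g(z)\,dz = 2\int_{\R}\tanh(z+w){\rm sech}^2(z)\,dz\to 2\int_{\R}{\rm sech}^2(z)\,dz=4$ as $\eps\to 0$, with exponentially small error. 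Therefore ${\cal A}=4\eps B^{-1}(1-a^2)(1+{\cal O}(\eps^{1/3}))$, and multiplying by $\tfrac12\dot b(1-b^2)^{1/2}$ produces the first line of the claim.

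I expect the main obstacle to be the bookkeeping of the error terms rather than any single estimate: unlike in Lemma \ref{proposition-2-soliton}, where the whole overlap is exponentially small, the dominant contribution of $\cal A$ is only of order $\eps$, so one must verify that the corrections coming from $\eta_{\eps}-\eta_0$, from the quadratic Taylor remainder, and from the linear term $\int z\,g(z)\,dz$ (whose size is kept down to ${\cal O}(w)$ by the bump structure of $g$, and pinned exactly by the reflection symmetry of $g$ about $z=-w/2$) are all of relative order ${\cal O}(\eps^{1/3})$. Everything else is a routine repetition of the change-of-variables manipulations already carried out in Lemmas \ref{proposition-1-soliton} and \ref{proposition-2-soliton}.
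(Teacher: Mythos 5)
Your proposal is correct and follows essentially the same route as the paper, whose entire proof consists of the remark that both terms of $K_{\rm overlap}$ are ``estimated similarly to the proof of Lemma \ref{proposition-2-soliton}'' and that the first term vanishes as $\eps \to 0$. Your second-term computation is exactly the substitution of the overlap integral already obtained in Lemma \ref{proposition-2-soliton}, and your treatment of the $\tanh$--${\rm sech}^2$ cross term (change of variables, truncation to the support of $\eta_0$, the integration-by-parts identity, and the reflection symmetry about $z=-w/2$ to control $\int z\,g$) correctly supplies the details the paper leaves implicit, with all error terms landing at the claimed relative order ${\cal O}(\eps^{1/3})$.
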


\begin{proof}
The first and second terms in $K_{\rm
overlap}$ are estimated similarly to the proof of Lemma \ref{proposition-2-soliton}.
Note that the first term disappears in the limit $\eps \to 0$.
\end{proof}

To obtain effective dynamical equations on $(a,b)$
valid in the domain specified by assumption (\ref{assumption-derivation}),
we expand $L_2(a,b) = \frac{L(v_2)}{2\eps}$ in the quadratic form
in $(a,b)$ and apply the limit $\eps \to 0$ to all but
the overlapping integrals. As a result,
the reduced effective Lagrangian $L_2(a,b)$ takes the form
\begin{eqnarray*}
L_2(a,b) & \sim & \frac{4 \sqrt{2}}{3} \left( 1 -
\frac{3}{2} (b^2 + a^2) + {\cal O}(b^2 + a^2)^2 \right)
- 4 a \dot{b} \left( 1 + {\cal O}(b^2 + a^2) \right) \\
& \phantom{t} & - 8 \sqrt{2} e^{-2 \sqrt{2} a \eps^{-1} (1 + {\cal O}(b^2 + a^2))}
\left( 1 + {\cal O}(b^2 + a^2) \right).
\end{eqnarray*}
In variables $(a,b)$, the Euler--Lagrange equations
at the leading order become
$$
\dot{a} = \sqrt{2} b, \quad \dot{b} = - \sqrt{2} a +
8 \eps^{-1} e^{-2\sqrt{2} a \eps^{-1}},
$$
or, equivalently, recover the nonlinear oscillator equation
$$
\ddot{a} + 2 a = 8 \sqrt{2} \eps^{-1}
e^{-\frac{2\sqrt{2} a}{\eps}}.
$$
The equilibrium state is given by the root $a_0(\eps)$ of the nonlinear
equation (\ref{nonlinear-equation-a}). This equilibrium state is a center and
linear oscillations near the center satisfy
$$
\ddot{\delta} + \omega_0^2 \delta = 0,
$$
where $\delta = a - a_0(\eps)$ and
\begin{eqnarray}
\nonumber
\omega_0^2(\eps) & = & 2 + \frac{32}{\eps^2} e^{-2\sqrt{2}
a_0(\eps) \eps^{-1}} = 2 + \frac{4 \sqrt{2} a_0(\eps)}{\eps} \\
\label{frequency} & = & -4 \log(\eps)
- 2 \log|\log(\eps)| + 2 + 6 \log(2) + o(1), \quad \mbox{\rm as} \quad \eps \to 0,
\end{eqnarray}
thanks to Lemma \ref{proposition-roots}. We note that the frequency $\omega_0(\eps)$ of out-of-phase oscillations
of two dark solitons grows in the limit $\eps \to 0$. This property will be further discussed in Section \ref{section-numerics}.

\section{$m$-solitons with $m \geq 2$}
\label{section-3-soliton}

We extrapolate the results of the previous section to the case of $m$-solitons with $m \geq 2$. The general
superposition of $m$ dark solitons is substituted in the form
\begin{eqnarray}
v_m(x,t) = \prod_{j = 1}^m \left( A_j(t) \; \tanh\left( \eps^{-1}
B_j(t) (x - a_j(t)) \right) + i b_j(t) \right),  \label{m-soliton}
\end{eqnarray}
where
$$
A_j = \sqrt{1 - b_j^2}, \quad B_j = \frac{1}{\sqrt{2}} \sqrt{1 - a_j^2} \sqrt{1
- b_j^2}, \quad j \in \{ 1,2,...,m\}.
$$
Under the same assumptions of
$$
|a_j| \leq C \eps^{1/6}, \quad j \in \{ 1,2,...,m\}
$$
and
$$
e^{-\sqrt{2} (a_{j+1}-a_j) \eps^{-1}} \leq C \eps^2 \log(\eps), \quad
j \in \{ 1,2,...,m-1 \},
$$
for some $C > 0$, we reduce the effective Lagrangian $L_m := \frac{L(v_m)}{2 \eps}$ to
the leading order
$$
L_m \sim -\sqrt{2} \sum_{j=1}^m \left( a_j^2 + b_j^2 \right) - 2 \sum_{j=1}^m a_j \dot{b}_j
- 8 \sqrt{2} \sum_{j=1}^{m-1} e^{-\sqrt{2} (a_{j+1}-a_j) \eps^{-1}},
$$
where only the quadratic terms in $(a_j,b_j)$ and only
the pairwise interaction potentials are taken into account. Using the Euler--Lagrange equations,
we obtain
\begin{equation}
\label{system-a-b-j}
\dot{a}_j = \sqrt{2} b_j, \quad \dot{b}_j = -\sqrt{2} a_j - 8 \eps^{-1} \left(
e^{-\sqrt{2} (a_{j+1}-a_j) \eps^{-1}} - e^{-\sqrt{2} (a_j-a_{j-1}) \eps^{-1}} \right), \quad
j \in \{1,2,...,m\},
\end{equation}
where boundary conditions $a_0 = -\infty$ and $a_{m+1} = \infty$ must be used. The center of mass
$\langle a \rangle = \frac{1}{m} \sum_{j = 1}^m a_j$ satisfies the linear oscillator equation
\begin{equation}
\label{lin-oscil}
\ddot{\langle a \rangle} + 2 \langle a \rangle = 0,
\end{equation}
which recovers the frequency of oscillations of a $1$ dark soliton in Section
\ref{section-1-soliton}. Let us introduce the set of normal coordinates
$$
x_j = \sqrt{2} (a_{j+1}-a_j) \eps^{-1}, \quad j \in \{1,2,...,m-1\},
$$
and rewrite system (\ref{system-a-b-j}) in the scalar form
\begin{equation}
\label{Toda-lattice}
\ddot{x}_j + 2 x_j + 16 \eps^{-2} \left( e^{-x_{j+1}} - 2 e^{-x_j} + e^{-x_{j-1}} \right) = 0, \quad
j \in \{1,2,...,m-1\},
\end{equation}
where the boundary conditions are now $x_0 = x_m = \infty$. System (\ref{Toda-lattice})
is known as the Toda lattice with nonzero masses, which is not integrable by inverse
scattering (unlike its counterpart with zero masses). We are only interested
in existence of critical points in the Toda lattice and in the
distribution of eigenvalues in the linearization around the critical points.

Critical points of the
Toda lattice (\ref{Toda-lattice}) are defined by solutions
of system of algebraic equations
\begin{equation}
\label{critical-points-Toda}
2 x_j + 16 \eps^{-2} \left( e^{-x_{j+1}} - 2 e^{-x_j} + e^{-x_{j-1}} \right) = 0, \quad
j \in \{1,2,...,m-1\}.
\end{equation}
Let the $(m-1)\times(m-1)$ matrix ${\bf A}$ be given by
$$
{\bf A} = \left[ \begin{array}{ccccccc} 2 & -1 & 0 & 0 & \cdots & 0 & 0\\
-1 & 2 & -1 & 0 & 0  & \cdots & 0 \\ 0 & -1 & 2 & -1 & \cdots & 0 & 0 \\
\vdots & \vdots & \vdots & \vdots  & \vdots & \vdots & \vdots \\
0 & 0 & 0 & 0  & ... & -1 & 2 \end{array} \right].
$$
Matrix ${\bf A}$ arises in the central-difference approximation of the second
derivative subject to the Dirichlet boundary conditions. It is strictly positive and
thus invertible. The system of algebraic equations (\ref{critical-points-Toda}) can
be written in the matrix-vector form
\begin{equation}
\label{critical-point-vector-form}
{\bf A} e^{-{\bf x}} = \frac{\eps^2}{8} {\bf x} \quad \Rightarrow \quad
e^{-{\bf x}} = \frac{\eps^2}{8} {\bf A}^{-1} {\bf x}.
\end{equation}
Solutions of system (\ref{critical-point-vector-form}) in the limit $\eps \to 0$
are analyzed in the following lemma.

\begin{lemma}
\label{lemma-Toda}
For sufficiently small $\eps > 0$, there exists a unique solution of system (\ref{critical-point-vector-form})
in the neighborhood of $\infty$, which is expanded by
\begin{equation}
\label{critical-points-leading}
{\bf x} = - 2 \log(\eps) {\bf 1} - \log|\log(\eps)| {\bf 1} + 2 \log(2) {\bf 1} - \log({\bf A}^{-1} {\bf 1}) +
o(1), \quad \mbox{\rm as} \quad \eps \to 0,
\end{equation}
where ${\bf 1} = [1,1,...,1]^T \in \R^{m-1}$.
\end{lemma}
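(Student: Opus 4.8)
The plan is to treat Lemma~\ref{lemma-Toda} as the vector-valued analogue of Lemma~\ref{proposition-roots} and to solve the transcendental system by a contraction (implicit-function) argument after taking componentwise logarithms. First I would record the relevant algebra of ${\bf A}$: besides being positive definite and hence invertible, it is an irreducible nonsingular M-matrix, so ${\bf A}^{-1}$ is \emph{entrywise} positive; explicitly, $({\bf A}^{-1})_{ik} = m^{-1}\min(i,k)\big(m-\max(i,k)\big)$ and in particular $({\bf A}^{-1}{\bf 1})_j = \tfrac12 j(m-j) > 0$ for $j \in \{1,\dots,m-1\}$. Consequently, on the region where all components $x_j$ are large and positive the vector ${\bf A}^{-1}{\bf x}$ has positive components, and one may take the componentwise logarithm of $e^{-{\bf x}} = \tfrac{\es}{8}\,{\bf A}^{-1}{\bf x}$ to rewrite the system as
\begin{equation*}
{\bf x} = -2\log(\eps)\,{\bf 1} + \log(8)\,{\bf 1} - \log\big( {\bf A}^{-1}{\bf x} \big),
\end{equation*}
where $\log(\cdot)$ acts componentwise on vectors with positive entries.

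Next I would perform the leading rescaling ${\bf x} = -2\log(\eps)\,{\bf U}$. Dividing the displayed identity by $-2\log(\eps)$ and using $\log({\bf A}^{-1}{\bf x})_j = \log(-2\log\eps) + \log({\bf A}^{-1}{\bf U})_j$, the system becomes
\begin{equation*}
{\bf U} = {\bf 1} + \frac{1}{-2\log(\eps)}\Big( \log(8)\,{\bf 1} - \log(-2\log\eps)\,{\bf 1} - \log\big( {\bf A}^{-1}{\bf U} \big) \Big).
\end{equation*}
Since $\log(-2\log\eps)/\log(\eps) \to 0$ and $\log({\bf A}^{-1}{\bf U})$ stays bounded near $\log({\bf A}^{-1}{\bf 1})$ for ${\bf U}$ close to ${\bf 1}$, the right-hand side is a small perturbation of ${\bf 1}$; the Implicit Function Theorem (equivalently, a contraction mapping on a small ball about ${\bf 1}$) gives a unique continuous branch ${\bf U}(\eps)$ with $\lim_{\eps\downarrow 0}{\bf U}(\eps) = {\bf 1}$. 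To upgrade this to uniqueness in the whole neighborhood of $\infty$, I would use the two-sided bound $c'\min_k x_k \leq ({\bf A}^{-1}{\bf x})_j \leq c\max_k x_k$ (valid for ${\bf x}$ with large positive entries, by positivity of ${\bf A}^{-1}$), which forces every solution to have all its components squeezed within ${\cal O}(1)$ of $-2\log(\eps)$, hence its rescaled ${\bf U}$ into the ball where the contraction applies.

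Then I would extract the full expansion by writing ${\bf x} = {\bf x}_0(\eps) + {\bf r}$, with ${\bf x}_0(\eps)$ the right-hand side of (\ref{critical-points-leading}), substituting into the logarithmic form of the system, and expanding
\begin{equation*}
\log\big( {\bf A}^{-1}({\bf x}_0 + {\bf r}) \big)_j = \log|\log(\eps)| + \log(2) + \log({\bf A}^{-1}{\bf 1})_j + {\cal O}\!\left( \frac{\log|\log\eps|}{|\log\eps|} \right) + {\cal O}\!\left( \frac{\|{\bf r}\|}{|\log\eps|} \right)
\end{equation*}
via the splitting $\log(\alpha v_j + w_j) = \log(\alpha) + \log(v_j) + \log(1 + w_j/(\alpha v_j))$ with $\alpha = -2\log(\eps)$ and $v_j = ({\bf A}^{-1}{\bf 1})_j$. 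All the unbounded contributions cancel against the corresponding terms of ${\bf x}_0(\eps)$ (note $\log(8) = 3\log(2)$), and what remains is a fixed-point problem ${\bf r} = \Phi_\eps({\bf r})$ with $\|\Phi_\eps({\bf 0})\| = {\cal O}(\log|\log\eps|/|\log\eps|)$ and Lipschitz constant ${\cal O}(1/|\log\eps|) < 1$ for small $\eps$; Banach's fixed-point theorem then yields a unique ${\bf r}(\eps) \to {\bf 0}$, which is precisely the claimed relation (\ref{critical-points-leading}).

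The hard part will be the uniform (in the component index $j$) control of the nonlinear term $\log({\bf A}^{-1}{\bf x})$ when ${\bf x}$ is a large common multiple of ${\bf 1}$ plus bounded, $j$-dependent corrections: one must first secure the entrywise positivity of ${\bf A}^{-1}$ so the logarithm is even defined, and then verify that the remainders in the expansion above are genuinely $o(1)$, and not merely ${\cal O}(1)$, uniformly over $j \in \{1,\dots,m-1\}$. The companion a priori clustering estimate — that every solution near $\infty$ has all components within ${\cal O}(1)$ of $-2\log(\eps)$ — is the other delicate point, since it is what underwrites the uniqueness assertion; the rest is the nested implicit-function bookkeeping already carried out in the proof of Lemma~\ref{proposition-roots}.
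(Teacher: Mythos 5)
Your proposal is correct and follows essentially the same route as the paper: the paper's proof simply takes the componentwise logarithm to get ${\bf x} = -2\log(\eps){\bf 1} + 3\log(2){\bf 1} - \log({\bf A}^{-1}{\bf x})$ and then invokes "repeating the proof of Lemma~\ref{proposition-roots}," i.e.\ the same rescaling ${\bf x} = -2\log(\eps){\bf U}$ and nested implicit-function (contraction) bookkeeping you describe. The details you flag as delicate --- the entrywise positivity of ${\bf A}^{-1}$ (so the componentwise logarithm is defined), the explicit $({\bf A}^{-1}{\bf 1})_j = \tfrac12 j(m-j)$, and the a priori localization underwriting uniqueness --- are exactly the points the paper leaves implicit, so your write-up is a faithful, more complete version of the same argument.
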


\begin{proof}
Applying the natural logarithm to system (\ref{critical-point-vector-form}), we rewrite the system as follows
$$
{\bf x} = - 2 \log(\eps) {\bf 1} + 3 \log(2) {\bf 1} - \log\left( {\bf A}^{-1} {\bf x} \right).
$$
Repeating the proof of Lemma \ref{proposition-roots}, we find the desired expansion (\ref{critical-points-leading}).
\end{proof}

Back to the physical variables $(a_1,...,a_m)$, the result of Lemma \ref{lemma-Toda} implies
that the coordinates of dark solitons are centered $\langle a \rangle = 0$ and distributed 
with nearly equal spacing as $\eps \to 0$. Linearizing the Toda lattice (\ref{Toda-lattice}) about the root of system
(\ref{critical-points-Toda}), we obtain the linear eigenvalue problem
\begin{equation}
\label{eigenvalue-general}
(2 - \omega^2) \xi_j - 16 \eps^{-2} \left( e^{-x_{j+1}} \xi_{j+1} - 2 e^{-x_j} \xi_j
+ e^{-x_{j-1}} \xi_{j-1} \right) = 0, \quad j \in \{1,2,...,m-1\},
\end{equation}
where $\xi_0$ and $\xi_m$ are not determined because the coefficients
in front of $\xi_0$ and $\xi_m$ are zero. Using the representation (\ref{critical-point-vector-form}),
we rewrite the linear eigenvalue problem in the form
\begin{equation}
\label{eigenvalue-general-explicit}
(2 - \omega^2) \xi_j - 2 \left( ({\bf A}^{-1} {\bf x})_{j+1} \xi_{j+1} - 2
({\bf A}^{-1} {\bf x})_{j} \xi_j + ({\bf A}^{-1} {\bf x})_{j-1} \xi_{j-1} \right)
= 0, \quad j \in \{1,2,...,m-1\}.
\end{equation}
Frequencies of oscillations are analyzed in the limit $\eps \to 0$ in the following lemma.

\begin{lemma}
\label{lemma-Toda-eigenvalues}
For sufficiently small $\eps > 0$, $(m-1)$ eigenvalues of the linear problem 
(\ref{eigenvalue-general-explicit}) are expanded by
\begin{equation}
\label{distribution-eigenvalues}
\omega^2 = 2 + \left( - 4 \log(\eps) - 2 \log|\log(\eps)| + 4 \log(2) \right) \Omega^2 + {\cal O}(1),
\end{equation}
where $\Omega^2 \in \left\{ 1, 3, 6, ..., \frac{m(m-1)}{2} \right\}$ and $m \geq 2$.
\end{lemma}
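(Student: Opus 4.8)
The plan is to recast (\ref{eigenvalue-general-explicit}) as a fixed $(m-1)\times(m-1)$ matrix eigenvalue problem, isolate and diagonalize its $\eps$-independent dominant part, and conclude by a perturbation argument. First I would set $C = C(\eps) := \mathrm{diag}\big(({\bf A}^{-1}{\bf x})_1,\ldots,({\bf A}^{-1}{\bf x})_{m-1}\big)$. Since ${\bf A} = \mathrm{tridiag}(-1,2,-1)$, the three-term difference in (\ref{eigenvalue-general-explicit}) — with the stated convention that the coefficients of $\xi_0,\xi_m$ vanish, i.e. $({\bf A}^{-1}{\bf x})_0 = ({\bf A}^{-1}{\bf x})_m = 0$ — equals $-({\bf A}C\xi)_j$, where $\xi := (\xi_1,\ldots,\xi_{m-1})$, so (\ref{eigenvalue-general-explicit}) is equivalent to ${\bf A}C\,\xi = \tfrac12(\omega^2-2)\,\xi$. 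As the entries of $C$ are $8\eps^{-2}e^{-x_j} > 0$ by (\ref{critical-point-vector-form}), $C$ is positive definite and ${\bf A}C$ is similar (via $C^{1/2}$) to the symmetric positive-definite matrix $C^{1/2}{\bf A}C^{1/2}$; hence ${\bf A}C$ has $m-1$ real positive eigenvalues and $\omega^2 > 2$.

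Next I would extract the dominant part using Lemma \ref{lemma-Toda}: ${\bf x}(\eps) = \gamma(\eps){\bf 1} + {\bf r}(\eps)$, with $\gamma(\eps) := -2\log(\eps) - \log|\log(\eps)| + 2\log(2) \to +\infty$ and ${\bf r}(\eps) = -\log({\bf A}^{-1}{\bf 1}) + o(1)$ bounded as $\eps \to 0$. Then $C(\eps) = \gamma(\eps)\,C_0 + \widetilde C(\eps)$, where $C_0 := \mathrm{diag}({\bf A}^{-1}{\bf 1})$ is a fixed positive diagonal matrix and $\widetilde C(\eps) := \mathrm{diag}({\bf A}^{-1}{\bf r}(\eps)) = {\cal O}(1)$, so that ${\bf A}C = \gamma(\eps)\,{\bf A}C_0 + {\cal O}(1)$. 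Thus everything reduces to understanding the fixed matrix ${\bf A}C_0$.

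The crux is to diagonalize ${\bf A}C_0$ exactly. Since ${\bf A}$ is minus the discrete Dirichlet Laplacian on the $m-1$ interior grid points, ${\bf A}^{-1}{\bf 1} = {\bf w}$ with $w_i = \tfrac12 i(m-i)$ (direct check: the second difference of $-\tfrac12 i^2$ is $-1$ and $w_0 = w_m = 0$, so ${\bf A}{\bf w} = {\bf 1}$). Writing $p_i := i(m-i)$ (so $p_0 = p_m = 0$), this gives $C_0 = \tfrac12\,\mathrm{diag}(p_i)$ and $({\bf A}C_0\,\xi)_i = -\tfrac12\big[(p\xi)_{i+1} - 2(p\xi)_i + (p\xi)_{i-1}\big]$ for $i = 1,\ldots,m-1$, with no boundary ambiguity because $p_0 = p_m = 0$ suppress the would-be contributions of $\xi_0,\xi_m$. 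For $0 \le d \le m-2$ let $V_d$ be the $(d+1)$-dimensional space of vectors $\xi_i = q(i)$ with $q$ a polynomial of degree $\le d$; since multiplication by $p_i$ (degree $2$) followed by the second-difference operator (which lowers degree by $2$) maps $V_d$ into $V_d$, the matrix ${\bf A}C_0$ is triangular along the flag $V_0\subset\cdots\subset V_{m-2} = \R^{m-1}$, and comparing leading coefficients the induced map on $V_d/V_{d-1}$ is multiplication by $\tfrac{(d+1)(d+2)}{2}$. Hence $\mathrm{spec}({\bf A}C_0) = \big\{\tfrac{k(k+1)}{2}: k = 1,\ldots,m-1\big\} = \big\{1,3,6,\ldots,\tfrac{m(m-1)}{2}\big\}$, a set of distinct eigenvalues, with eigenvectors given by discrete orthogonal polynomials in $i$ (the first being the constant vector ${\bf 1}$, consistent with the $m=2$ value (\ref{frequency})).

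Finally, from ${\bf A}C \sim C^{1/2}{\bf A}C^{1/2}$ and $C^{1/2} = \gamma(\eps)^{1/2}C_0^{1/2} + {\cal O}(\gamma(\eps)^{-1/2})$ (valid since $C, C_0$ are diagonal), one obtains $C^{1/2}{\bf A}C^{1/2} = \gamma(\eps)\,C_0^{1/2}{\bf A}C_0^{1/2} + {\cal O}(1)$ with symmetric remainder; as $C_0^{1/2}{\bf A}C_0^{1/2}$ has the same spectrum $\{\tfrac{k(k+1)}{2}\}$ as ${\bf A}C_0$, Weyl's perturbation inequality shows the eigenvalues of ${\bf A}C$ are $\gamma(\eps)\,\tfrac{k(k+1)}{2} + {\cal O}(1)$, $k = 1,\ldots,m-1$. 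By the first step, $\omega^2 = 2 + \gamma(\eps)\,k(k+1) + {\cal O}(1)$; putting $\Omega^2 := \tfrac{k(k+1)}{2} \in \{1,3,6,\ldots,\tfrac{m(m-1)}{2}\}$ and $\gamma(\eps)\,k(k+1) = 2\gamma(\eps)\Omega^2 = \big(-4\log(\eps) - 2\log|\log(\eps)| + 4\log(2)\big)\Omega^2 + o(1)$ yields (\ref{distribution-eigenvalues}). The main obstacle is the exact diagonalization in the third paragraph — recognizing the triangular-number spectrum of the $\eps$-independent nonsymmetric tridiagonal matrix ${\bf A}C_0$; once one spots that $C_0$ is, up to the factor $\tfrac12$, the diagonal matrix of values of the parabola $i(m-i)$, so that ${\bf A}C_0$ preserves the polynomial flag, the eigenvalues fall out, and the remaining steps are routine perturbation bookkeeping.
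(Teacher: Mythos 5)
Your argument is correct, and its mathematical core is the same as the paper's: both reduce to the $\eps$-independent problem built from ${\bf v}={\bf A}^{-1}{\bf 1}$, compute $v_j=\tfrac12 j(m-j)$ from the discrete Dirichlet problem, and read off the triangular-number spectrum from polynomial eigenvectors by matching leading coefficients (your induced multiplier $\tfrac{(d+1)(d+2)}{2}$ on $V_d/V_{d-1}$ is exactly the paper's $\lambda=(n+1)(n+2)$ with $\lambda=2\Omega^2$). The two surrounding steps are packaged differently, and in both places your version is tighter. For the reduced problem, the paper shifts to $k=j-\tfrac{m}{2}$, builds monic polynomial eigenvectors $P_n(k)$ on all of $\Z$, and then must argue separately that $P_0,\dots,P_{m-2}$ restrict to a basis on the finite index set ${\cal I}_m$ while $P_j$ vanishes identically there for $j\geq m-1$; your observation that ${\bf A}C_0$ preserves the flag $V_0\subset\cdots\subset V_{m-2}=\R^{m-1}$ (with the boundary handled automatically by $p_0=p_m=0$) produces the same simple spectrum directly in dimension $m-1$, with no detour through the infinite lattice and no nonvanishing argument. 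For the perturbation step, the paper invokes ``regular perturbation theory'' for a nonsymmetric family whose leading part grows like $|\log\eps|$; your symmetrization ${\bf A}C\sim C^{1/2}{\bf A}C^{1/2}=\gamma(\eps)\,C_0^{1/2}{\bf A}C_0^{1/2}+{\cal O}(1)$ followed by Weyl's inequality converts that into a genuine ${\cal O}(1)$ bound on each eigenvalue, which is precisely what (\ref{distribution-eigenvalues}) asserts, and it also gives for free that all $\omega^2$ are real and exceed $2$. The only cosmetic remark is that the final $o(1)$ attached to $2\gamma(\eps)\Omega^2$ is superfluous, since your $\gamma(\eps)$ equals $-2\log(\eps)-\log|\log(\eps)|+2\log(2)$ exactly; it is harmlessly absorbed into the ${\cal O}(1)$.
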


\begin{proof}
Let $\Omega^2$ be eigenvalues of the reduced eigenvalue problem
\begin{equation}
\label{eigenvalue-reduced}
\Omega^2 \xi_j + v_{j+1} \xi_{j+1} - 2 v_j \xi_j
+ v_{j-1} \xi_{j-1} = 0, \quad j \in \{1,2,...,m-1\},
\end{equation}
where ${\bf v} = {\bf A}^{-1} {\bf 1} \in \R^{m-1}$. We will show that all eigenvalues
of the reduced eigenvalue problem (\ref{eigenvalue-reduced}) are simple and given explicitly by
$\Omega^2 \in \left\{ 1, 3, 6, ..., \frac{m(m-1)}{2} \right\}$. If this is the case, 
the asymptotic expansion (\ref{critical-points-leading}) and the regular perturbation
theory for the matrix eigenvalue problem (\ref{eigenvalue-general-explicit}) imply that
$$
\left| \omega^2 - 2 + \left( 4 \log(\eps) + 2 \log|\log(\eps)| - 4 \log(2) \right) \Omega^2 \right|
= {\cal O}(1), \quad \mbox{\rm as} \quad \eps \to 0,
$$
for each eigenvalue $\Omega^2$.

To obtain the exact distribution of eigenvalues of 
the reduced eigenvalue problem (\ref{eigenvalue-reduced}),
we will find the vector ${\bf v}$ explicitly. The components of ${\bf v}$ satisfy the Dirichlet problem for second-order difference equations
$$
2 v_j - v_{j+1} - v_{j-1} = 1, \quad j \in \{1,2,...,m-1\},
$$
subject to $v_0 = v_m = 0$. The exact solution of this problem is
$$
v_j = \frac{1}{2} j (m-j), \quad j \in \{1,2,...,m-1\}.
$$
Let $k = j - \frac{m}{2}$, so that
$k \in {\cal I}_m := \{-\frac{m}{2}+1,-\frac{m}{2} + 2, ..., \frac{m}{2}-1\}$.
Note that ${\cal I}_m$ includes integer values for even $m$ and half-integer values for odd $m$.
Denote $\zeta_k = \xi_j$, and $\lambda = 2 \Omega^2$ and rewrite the reduced eigenvalue problem
(\ref{eigenvalue-reduced}) in the following explicit form
\begin{equation}
\label{eigenvalue-explicit}
\lambda \zeta_k = \left( \frac{m^2}{4} - k^2 \right) \left(
2 \zeta_k - \zeta_{k+1} - \zeta_{k-1} \right) + 2k \left( \zeta_{k+1} - \zeta_{k-1} \right)
+ \left( \zeta_{k+1} + \zeta_{k-1} \right), \quad k \in {\cal I}_m.
\end{equation}
First, we consider the problem (\ref{eigenvalue-explicit}) for all $k \in \mathbb{Z}$ 
with a fixed $m \geq 2$ and prove that there exists a basis of eigenvectors
$\mbox{\boldmath $\zeta$} \in \{ {\bf P}_n \}_{n \in \N_0}$ in the space of analytic functions on $\mathbb{Z}$
for an infinite set of eigenvalues $\lambda \in \{ (n+1)(n+2) \}_{n \in \mathbb{N}_0}$, where 
$\mathbb{N}_0 := \{0,1,2,...\}$. The corresponding
eigenvector $\mbox{\boldmath $\zeta$} = {\bf P}_n$ 
for each eigenvalue $\lambda = (n+1)(n+2)$ is given by the polynomial $P_n(k)$ in the form
\begin{equation}
\label{representation-polynomial}
\zeta_k = P_n(k) := k^n + c_1 k^{n-1} + c_2 k^{n-2} + ... + c_n, \quad k \in \mathbb{Z},
\end{equation}
with uniquely determined coefficients $(c_1,c_2,...,c_n)$. To show this, we note that
if $\mbox{\boldmath $\zeta$} \in {\cal P}_n$, where ${\cal P}_n$ is the vector space
of polynomials of degree $n$, then the vector field of the eigenvalue problem (\ref{eigenvalue-explicit})
belongs to ${\cal P}_n$. This follows from the fact that if $\mbox{\boldmath $\zeta$} \in {\cal P}_n$, then
\begin{equation}
\label{parities}
\left( 2 \zeta_k - \zeta_{k+1} - \zeta_{k-1} \right) \in {\cal P}_{n-2}, \quad
\left( \zeta_{k+1} - \zeta_{k-1} \right) \in {\cal P}_{n-1}, \quad
\left( \zeta_{k+1} + \zeta_{k-1} \right) \in {\cal P}_{n}.
\end{equation}
Substituting the representation (\ref{representation-polynomial}) to the linear eigenvalue problem
(\ref{eigenvalue-explicit}), we collect coefficients in front of $k^n$ to find that $\lambda = (n+1)(n+2)$
and the coefficients in front of $k^{n-1}$, $k^{n-2}$, ..., $k^0$ to find a lower triangular system
of linear equations for $c_1$, $c_2$, ..., $c_n$. The lower triangular coefficient matrix is invertible
(non-singular) because, if this is not the case, a homogeneous solution would exist 
to give a polynomial of a lower degree for the same eigenvalue $\lambda$. This contradicts to the fact that
the set $\{ (n+1)(n+2) \}_{n \in \mathbb{N}_0}$ includes only simple eigenvalues. Therefore,
a unique value for $(c_1,c_2,...,c_n)$ exists for a given $n$. All eigenvectors are linearly independent
since polynomials of different degrees defined on $\mathbb{Z}$ are linearly independent. The set of
all eigenvectors gives a basis of eigenvectors in the space of analytic functions on $\mathbb{Z}$.

Finally, we will prove that the basis of eigenvectors for the linear eigenvalue problem (\ref{eigenvalue-explicit})
on ${\cal I}_m$ with $m \geq 2$ is given by $\{ {\bf P}_0, {\bf P}_1,...,{\bf P}_{m-2} \}$,
which corresponds to the first $(m-1)$ eigenvalues $\lambda \in \{ 2,6,...,m(m-1)\}$. This follows from the fact
that each polynomial ${\bf P}_j$ is nonzero on ${\cal I}_m$ for $j \in \{0,1,...,m-2\}$ in the sense of
\begin{equation}
\label{condition-polynomial}
\sum_{k \in {\cal I}_m} |P_j(k)| \neq 0, \quad j \in \{0,1,...,m-2\}.
\end{equation}
By a contradiction, assume that condition (\ref{condition-polynomial}) is false, that is $P_j(k)$ has
$(m-1)$ roots on $\mathbb{R}$. However, $j < (m-1)$ and by the Fundamental Theorem
of Algebra, $P_j(k) \equiv 0$ for all $k \in \mathbb{Z}$, which is a contradiction. Therefore, condition
(\ref{condition-polynomial}) is satisfied. Furthermore, since polynomials $\{ {\bf P}_0, {\bf P}_1,...,{\bf P}_{m-2} \}$ correspond to distinct eigenvalues, these eigenvectors are linearly independent and form a basis
of eigenvectors on ${\cal I}_m$. This imply that all other polynomials in the set $\{ {\bf P}_j \}_{j \geq m-1}$
are linearly dependent from $\{ {\bf P}_0, {\bf P}_1,...,{\bf P}_{m-2} \}$ on ${\cal I}_m$, which means,
in view of different degrees and distinct eigenvalues, that ${\bf P}_j$ are identically zero on ${\cal I}_m$ for
all $j \geq m-1$. Therefore, the basis of eigenvectors for the linear eigenvalue problem (\ref{eigenvalue-explicit})
on ${\cal I}_m$ with $m \geq 2$ is given by $\{ {\bf P}_0, {\bf P}_1,...,{\bf P}_{m-2} \}$.
\end{proof}

We note that the polynomials $P_j(k)$ in the proof of Lemma \ref{lemma-Toda-eigenvalues}
are even in $k \in \Z$ for even $j$ and odd in $k \in \Z$ for odd $j$. This follows from
the parity transformations of operators in (\ref{parities}) and the explicit form of the 
linear eigenvalue problem (\ref{eigenvalue-explicit}). For example, let $m = 4$ so that ${\cal I}_4 = \{-1,0,1\}$ and
compute eigenvectors and eigenvalues of (\ref{eigenvalue-explicit}) explicitly:
\begin{eqnarray*}
& \phantom{t} & \lambda = 1 : \quad \zeta_k = P_0(k) = 1, \\
& \phantom{t} & \lambda = 3 : \quad \zeta_k = P_1(k) = k, \\
& \phantom{t} & \lambda = 6 : \quad \zeta_k = P_2(k) = k^2 - \frac{3}{5}.
\end{eqnarray*}
For the same case $m = 4$, $P_3(k) = k(k^2-1)$ so that $P_3(k) = 0$ for all $k \in {\cal I}_4$.

We finish this section with the explicit asymptotic approximations for $3$-solitons ($m = 3$). 
By the symmetry of system (\ref{critical-points-Toda}) with $m = 3$,
we understand that
$$
x_1 = x_2 = \sqrt{2} a \eps^{-1} \quad \Leftrightarrow \quad a_1 = -a, \;\; a_2 = 0, \;\; a_3 = a,
$$
where $a$ is a root of equation
$$
a - 4 \sqrt{2} \eps^{-1} e^{-\sqrt{2} a \eps^{-1}} = 0,
$$
which is expanded asymptotically as
\begin{equation}
\label{expansion-a-3-soliton}
a = \frac{\eps}{\sqrt{2}} \left( - 2 \log(\eps) - \log|\log(\eps)| + 2 \log(2) + o(1) \right), \quad
\mbox{\rm as} \quad \eps \to 0.
\end{equation}
Comparison with the asymptotic expansion (\ref{critical-points-leading}) shows that
$\log({\bf A}^{-1} {\bf 1}) = {\bf 0}$ or ${\bf v} = {\bf 1}$, which means that
the asymptotic distribution of frequencies (\ref{distribution-eigenvalues})
becomes accurate for $m = 3$ with ${\cal O}(1)$ replaced by $o(1)$. As a result, we find
asymptotic expansions of the two frequencies of out-of-phase oscillations near the
$3$-soliton equilibrium state in the form:
\begin{eqnarray}
\label{asympt-fre-1}
\left\{ \begin{array}{l}
\omega^2 = 2 + \left( - 4 \log(\eps) - 2 \log|\log(\eps)| + 4 \log(2) \right) + o(1), \\
\omega^2 = 2 + 3 \left( - 4 \log(\eps) - 2 \log|\log(\eps)| + 4 \log(2) \right) + o(1).
\end{array} \right.
\end{eqnarray}
These asymptotic results will be tested numerically in Section \ref{section-numerics}.

\section{Numerical results}
\label{section-numerics}

We now compare the asymptotic results with direct numerical
results for the existence and spectral stability of 2- and 3-soliton configurations.
We identify the relevant branches of stationary solutions by solving
the ordinary differential equation
\begin{equation}
\label{ODEphys}
- \frac{1}{2} v''(\xi) + \frac{1}{2} \xi^2 v(\xi) + v^3(\xi)  - \mu v(\xi) = 0, \quad \xi \in \R.
\end{equation}
A fixed point method (Newton-Raphson iteration) is used to solve a
discretized boundary-value problem, after a centered-difference scheme is
applied to the second-order derivatives with a typical spacing of
$\Delta \xi = 0.025$. The resulting solutions $v(\xi)$ are obtained
starting from the corresponding linear eigenfunction (with 2-
or 3-nodes at the linear limit) and continuation over the values of the
chemical potential parameter $\mu$ is used in order to
extend the branch to the large values of $\mu$. Note that
the existence and spectral stability of the $1$-soliton configuration
were examined in our earlier work in \cite{PelKev}.

Once the stationary solution is obtained for each
value of $\mu$, we linearize around it, using an ansatz of the form:
\begin{eqnarray}
v(\xi,\tau) = v(\xi) + \delta \left( a(\xi) e^{\lambda \tau} +
\bar{b}(\xi) e^{\bar{\lambda} \tau} \right),
\label{linearize}
\end{eqnarray}
where $\delta$ denotes a formal (small) parameter. The admissible values of
$\lambda$ (eigenvalues) are found from the condition that $(a,b) \in L^2(\R)$
is a solution of the linear eigenvalue problem
\begin{eqnarray}
\label{lin-spectral}
\left\{ \begin{array}{l}
- \frac{1}{2} a''(\xi) + \frac{1}{2} \xi^2 a(\xi) - \mu a(\xi) + v^2(\xi) (2 a(\xi) + b(\xi)) =
i \lambda a(\xi), \\
- \frac{1}{2} b''(\xi) + \frac{1}{2} \xi^2 b(\xi) - \mu b(\xi) + v^2(\xi) (a(\xi) + 2 b(\xi)) =
-i \lambda b(\xi). \end{array} \right.
\end{eqnarray}
Using again a discretization of the differential operators on the same
grid, we reduce (\ref{lin-spectral})
to a matrix eigenvalue problem which can be solved through standard numerical
linear algebra routines.

Our main results are summarized in Figures \ref{dfig1}-\ref{dfig2}
for the 2-soliton configuration and Figures \ref{dfig3}-\ref{dfig4}
for the 3-soliton case.

Fig. \ref{dfig1} compares the numerical result (solid line)
for the location of zeros of $v(\xi)$ to 
the asymptotic expansion (\ref{asymptotic-a-result}) (dash-dotted line), 
where the scaling transformation (\ref{translation}) has been taken into
account to translate the results from $\eps$ to $\mu$ by $\eps =
(2 \mu)^{-1}$. One can see that the asymptotic expansion yields a highly accurate
approximation of the numerical result. This is also
evidenced by the right panel of the figure comparing
the numerical solution $v(\xi)$ for $\mu=17$ (solid line) with
the variational ansatz (dashed line).

Fig. \ref{dfig2} shows the smallest eigenvalues of the linear eigenvalue
problem (\ref{lin-spectral}) obtained numerically (solid line). The resulting
eigenvalues can be classified into two types. The first one consists of a
countable
set of pairs of purely imaginary eigenvalues that give frequencies of
oscillations
of the ground state. The main result in Gallo \& Pelinovsky \cite{GalPel1}
states that
the frequencies of oscillations of the ground state $\eta_{\eps}$
are found in the limit $\eps \to 0$ as follows
$$
\lim_{\eps \to 0} \omega_n(\eps) = \sqrt{2 n (n+1)}, \quad n \geq 1.
$$
Note that $\omega_1(\eps) = 2$ is preserved for any $\eps > 0$ thanks to the
symmetry of the Gross--Pitaevskii equation with a harmonic
potential \cite{PelKev}.
Using the scaling transformation (\ref{translation}), we conclude that
these frequencies
satisfy the asymptotic limit
\begin{equation}
\label{asym-limits}
\lim_{\mu \to \infty} {\rm Im}(\lambda) = \frac{\sqrt{n (n+1)}}{\sqrt{2}}, \quad n \geq 1.
\end{equation}
The asymptotic limits (\ref{asym-limits}) are
shown on Fig. \ref{dfig2} by dashed lines.

The second set of eigenvalues consists of only two pairs of eigenvalues
and is associated with the
relative motions of the dark solitons \cite{our2}.
One pair of eigenvalues corresponds
to in-phase oscillations with frequencies ${\rm Im}(\lambda) \sim \frac{1}{\sqrt{2}}$
as $\mu \to \infty$ (or $\omega \sim \sqrt{2}$ as $\eps \to 0$
in notations of the linear oscillator equation (\ref{lin-oscil})).
The other pair of
eigenvalues corresponds to out-of-phase oscillations and it is
characterized by the asymptotic
expansion (\ref{frequency}). The asymptotic predictions for the second
set of frequencies are
shown by the dash-dotted lines.

The right panel of Fig. \ref{dfig2} shows the real part of the eigenvalues
close to the limit of local bifurcation at $\mu = \frac{5}{2}$. The instability, which was
studied in \cite{ZAKP}, is caused by the resonance between the out-of-phase $2$-soliton oscillations
and the quadrupolar oscillation mode of the ground state. Contrary to what is claimed in
numerical work of \cite{ZAKP}, we can see from Fig. \ref{dfig2} that the instability interval is finite
and the $2$-soliton excited state may be linearly stable for sufficiently
large values of the chemical potential $\mu$.

We note, however, that the frequency $\omega_0(\eps)$ of the out-of-phase oscillations of two
dark solitons given by the asymptotic expansion (\ref{frequency}) grows as $\eps \to 0$.
As a result, this frequency will coalesce with other frequencies $\omega_n(\eps)$, $n \geq 3$
associated with oscillations of the ground state as $\eps \to 0$.
Coalescence with the frequency $\omega_3(\eps)$ does not produce
an instability, because of the different parity of the corresponding eigenfunctions.
However, coalescence with the frequency $\omega_4(\eps)$ will produce the instability
again and it will happen roughly at $\eps \sim e^{-10}$. This value
of $\eps$ is too small to be confirmed by our numerical results on Fig. \ref{dfig2}.
This secondary instability of the $2$-soliton excited state is anticipated in a tiny interval
near $\eps \sim e^{-10}$, after which the neutrally stable frequency $\omega_0(\eps)$ will reappear
until further such coalescence occurrences arise
with frequencies $\omega_6(\eps)$, $\omega_8(\eps)$, etc.

\begin{figure}
\begin{center}
\includegraphics[width=8cm,height=6cm]{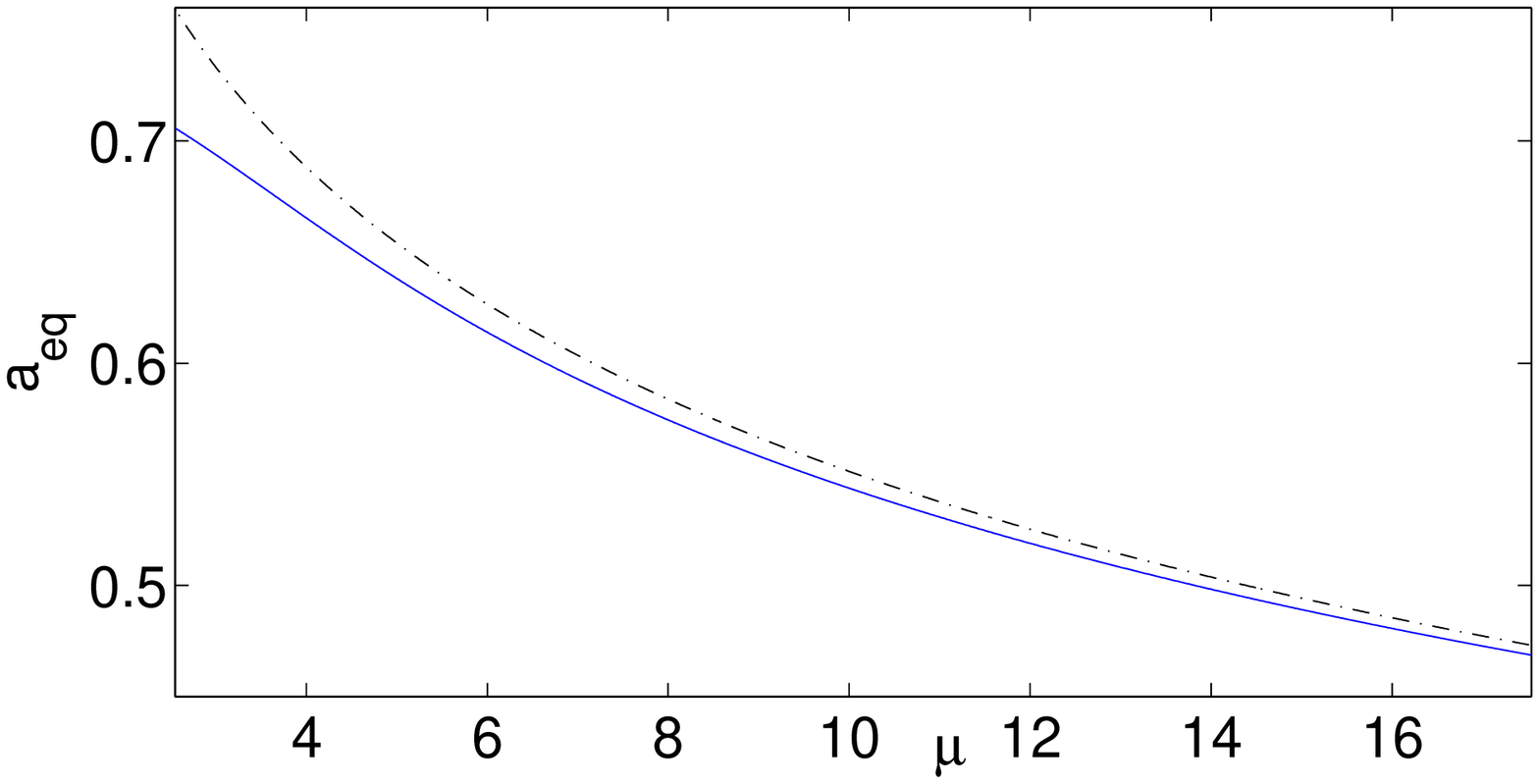}
\includegraphics[width=8cm,height=6cm]{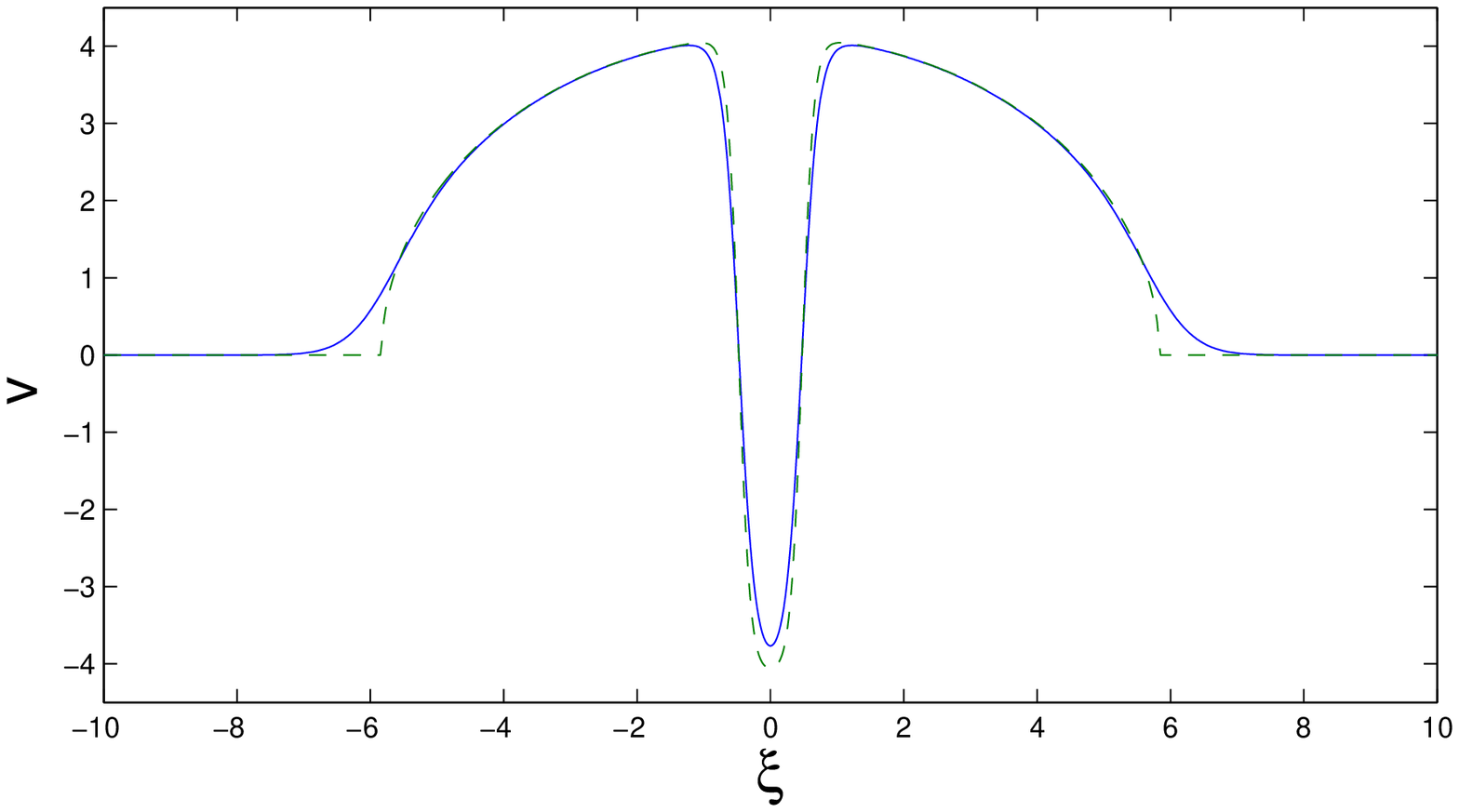}
\end{center}
\caption{Left: the equilibrium position of the two dark solitons versus
the chemical potential $\mu$. The solid line shows the direct
numerical result and the dash-dotted line represents the asymptotic approximation
(\ref{asymptotic-a-result}). Right: the solid
line shows the numerical solution $v(\xi)$ for $\mu=17$, while the dashed line
represents
the corresponding variational ansatz.} \label{dfig1}
\end{figure}
\begin{figure}
\begin{center}
\includegraphics[width=8cm,height=6cm]{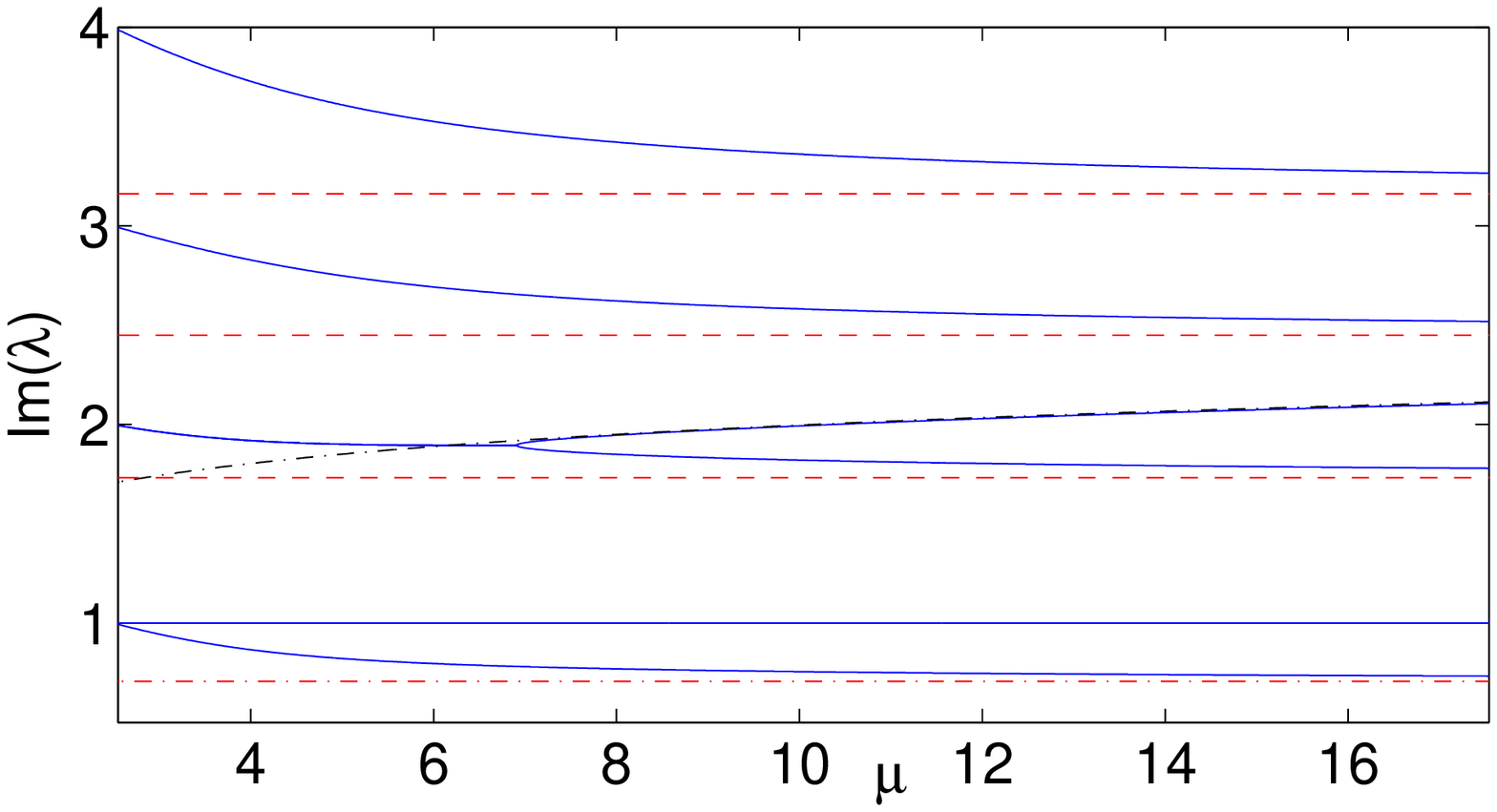}
\includegraphics[width=8cm,height=6cm]{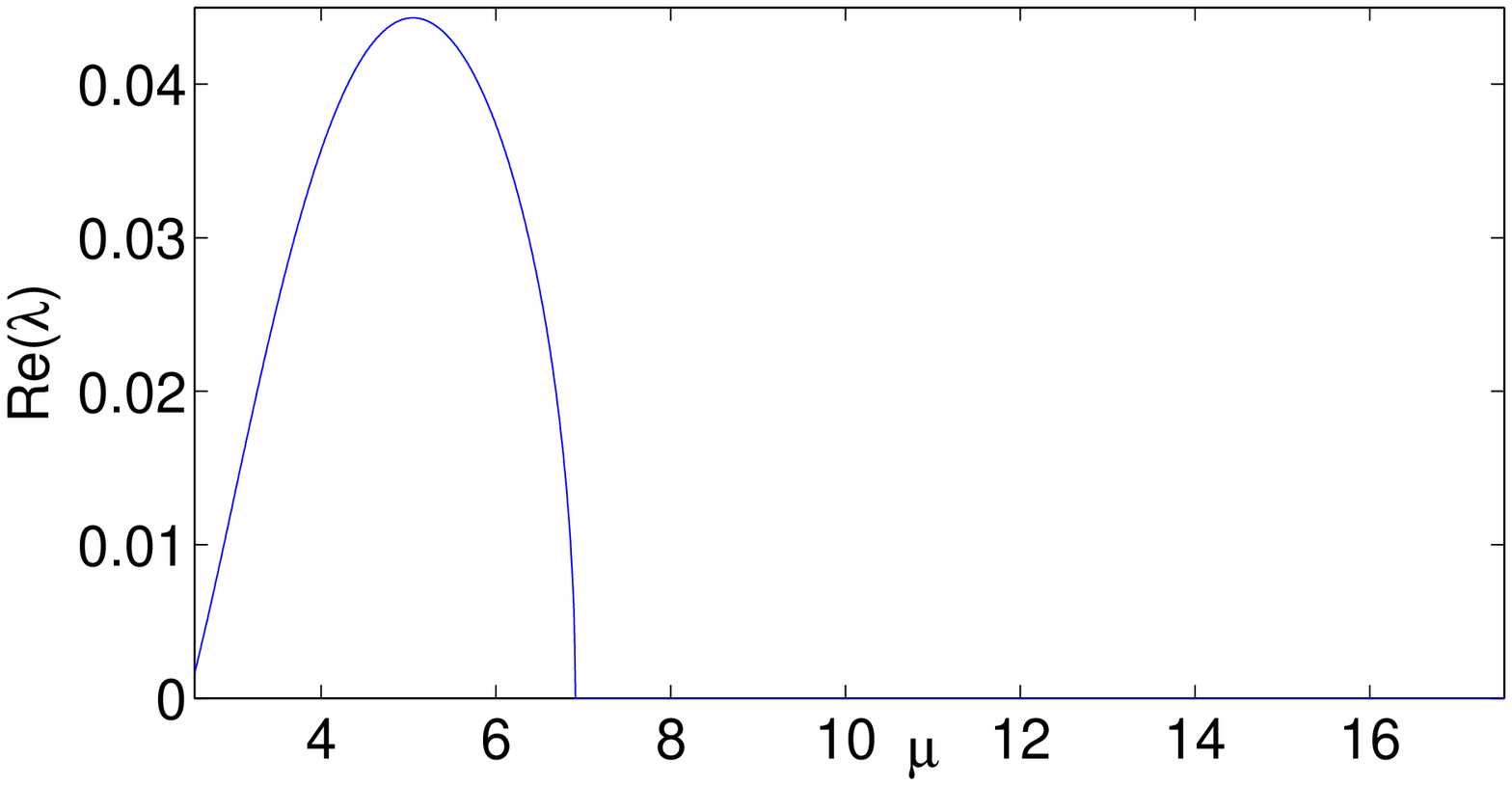}
\end{center}
\caption{Left: solid lines indicate the frequencies of linearization
around a 2-soliton solution as a function of the chemical potential
$\mu$. The dashed lines show the asymptotic limits (\ref{asym-limits})
for the frequencies around the ground state.
The dash-dotted lines indicate the asymptotic predictions for
the in-phase (lower frequency)
and out-of-phase (higher frequency) oscillations of $2$ dark solitons.
Right: real part of the unstable eigenvalue in a finite instability band
near the linear limit of $\mu = \frac{5}{2}$.} \label{dfig2}
\end{figure}

Figures \ref{dfig3} and \ref{dfig4} illustrate similar characteristics
but for the 3-soliton state. Once again the variational
prediction given by the asymptotic expansion (\ref{expansion-a-3-soliton})
provides a highly accurate estimate of the numerical inter-soliton distance
$a = a_3 - a_2 = a_2 - a_1$.

On the other hand, in this case, there exist
three frequencies associated with the relative motions of three
dark solitons, whose values can be seen to be in very good agreement with the
asymptotic expansion (\ref{asympt-fre-1}). Close to the linear limit
$\mu = \frac{7}{2}$, there exists two resonances between out-of-phase
motion of three dark solitons and the corresponding frequencies of oscillations
of the ground state. The two resonances induce instabilities of the $3$-soliton
excited states with two finite instability bands.

\begin{figure}
\begin{center}
\includegraphics[width=8cm,height=6cm]{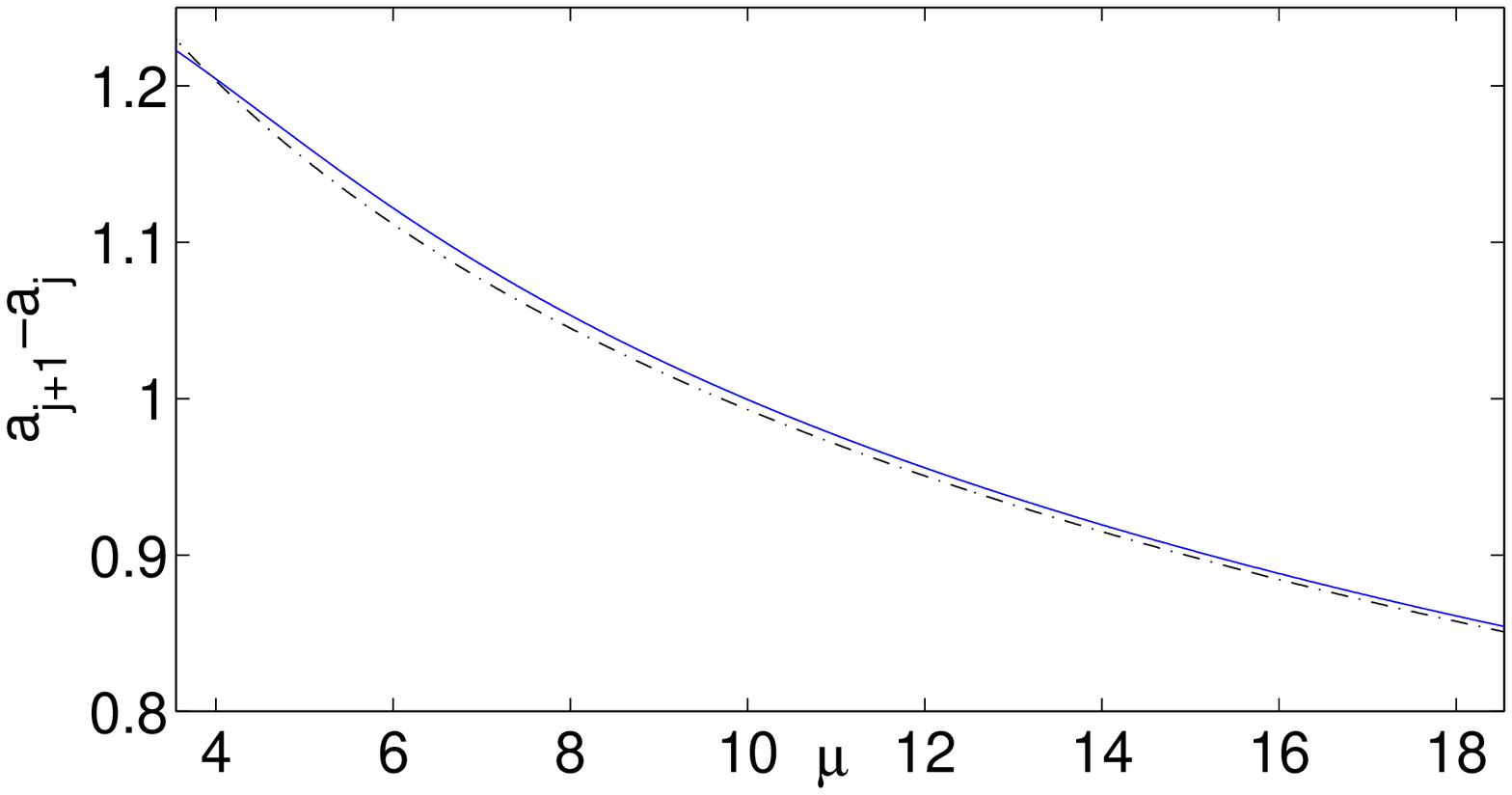}
\includegraphics[width=8cm,height=6cm]{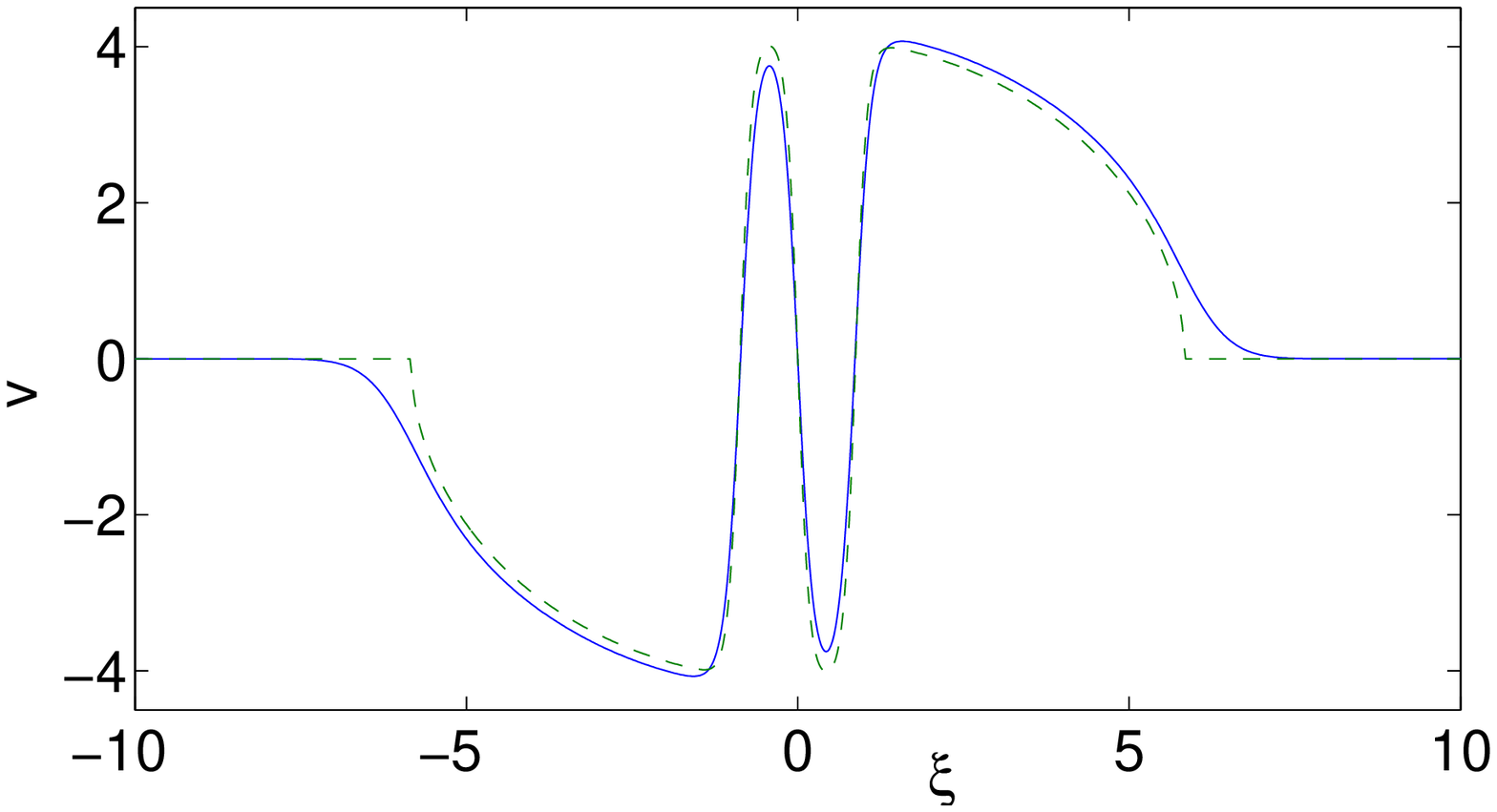}
\end{center}
\caption{Similar to Fig. \ref{dfig1} but for the 3-soliton case.
The left panel again shows the equilibrium inter-soliton distance
(solid: numerical results; dash-dotted:
asymptotic approximation), while the right shows the numerical prediction
(solid) and variational ansatz (dashed) of the
3-soliton state $v(\xi)$ for $\mu=17$.} \label{dfig3}
\end{figure}
\begin{figure}
\begin{center}
\includegraphics[width=8cm,height=6cm]{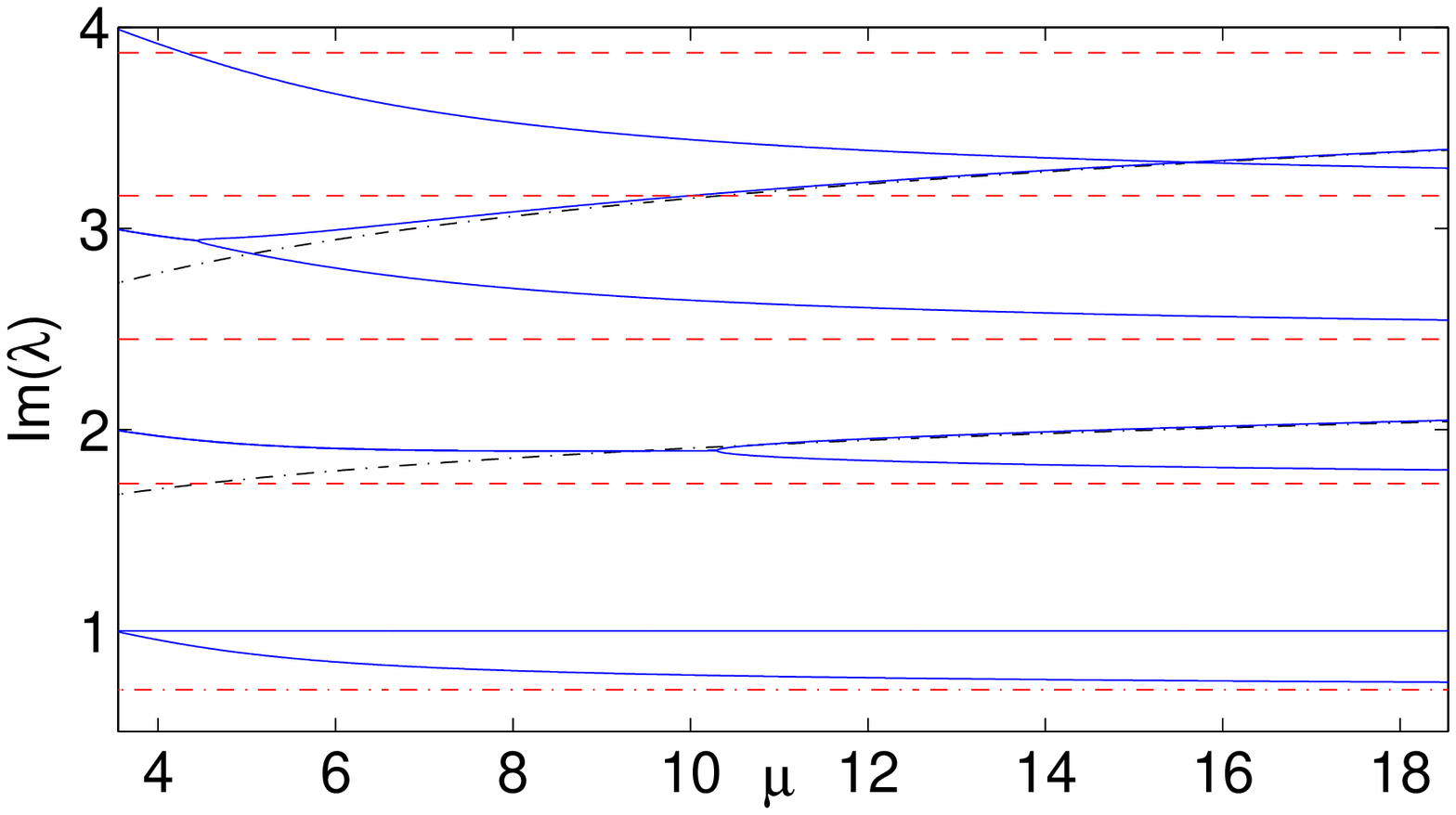}
\includegraphics[width=8cm,height=6cm]{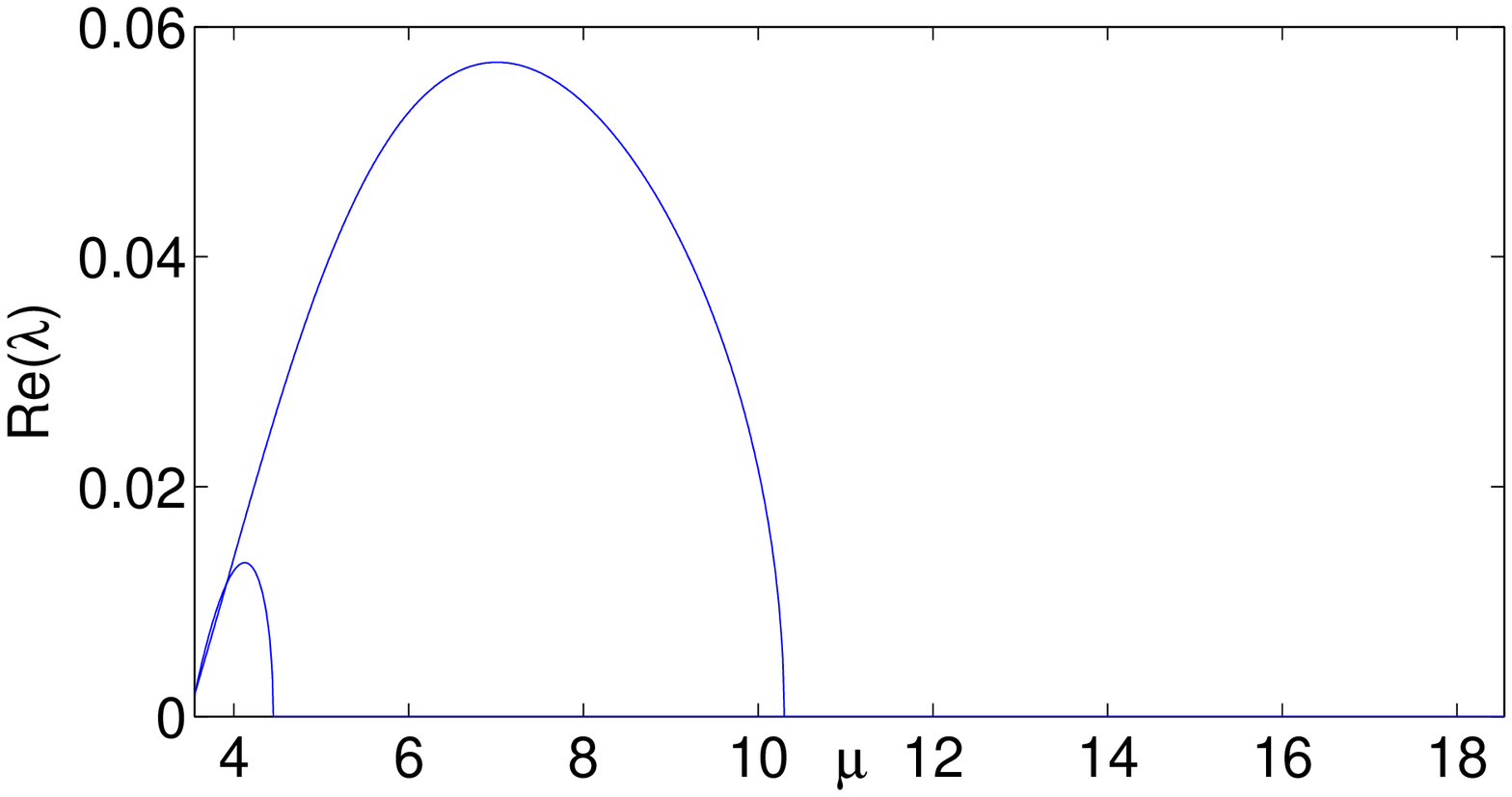}
\end{center}
\caption{Same as Fig. \ref{dfig2}, but for the 3-soliton case.
The left panel shows the numerical frequencies (imaginary parts
of the relevant eigenvalues) by solid line, the asymptotic limits for
the frequencies of the ground state by dashed line, and
the frequencies of oscillations of three dark solitons by dash-dotted line.
The right panel illustrates the real part of the unstable eigenmodes arising
close to the linear limit $\mu = \frac{7}{2}$.} \label{dfig4}
\end{figure}

The above results provide a relatively complete understanding
of the statics and dynamics of multi-soliton states within
Bose-Einstein condensates at least within the Thomas-Fermi
limit of large chemical potential. This characterization is
especially relevant presently given the recent experiments
of \cite{our2,our1} enabling the observation and robust
time-following for large timescales (of the order of hundred
milliseconds or more) of such states. However, there would
be a multitude of directions in which it would be relevant
to generalize these results, if possible. On the one hand,
extending them (analytically) to non-polynomial variants
of the Gross-Pitaevskii equation accounting for the
confinement of the condensate across tranvserse directions
would be a challenging theoretical task. Another equally
interesting direction would involve attempting to generalize
relevant notions in trying to characterize the dynamics of
vortex solitons in higher dimensional settings. These directions
are presently under consideration and corresponding results will
be reported in future publications.

\vspace{5mm}

{\bf Acknowledgments}: MC is supported by the NSERC USRA scholarship, 
DEP is partially supported by the NSERC grant, and PGK is partially supported
by NSF-DMS-0349023 (CAREER), NSF-DMS-0806762 and the Alexander-von-Humboldt
Foundation.

\end{document}